\documentclass[a4paper]{article}

\usepackage{cite}
%\documentclass{svjour3}                     % onecolumn (standard format)
%\documentclass[smallcondensed]{svjour3}     % onecolumn (ditto)
%\documentclass[smallextended]{svjour3}       % onecolumn (second format)
%\documentclass[twocolumn]{svjour3}          % twocolumn
%
% \smartqed  % flush right qed marks, e.g. at end of proof
%
% \usepackage{graphicx}
%
% \usepackage{mathptmx}      % use Times fonts if available on your TeX system
%
% insert here the call for the packages your document requires
%\usepackage{latexsym}
% etc.
%
% please place your own definitions here and don't use \def but
% \newcommand{}{}
%

\usepackage[dvips]{graphicx}
\usepackage{latexsym}

\usepackage[all,cmtip]{xy}
\PassOptionsToPackage{hyphens}{url}\usepackage{hyperref}
\usepackage{epsfig,endnotes}

\usepackage{amsmath}
\usepackage{amsthm}
\usepackage{amssymb}
\usepackage{amsmath}
\usepackage{bm}
\usepackage{algpseudocode,amssymb,amsfonts,color,xspace}
\usepackage{pdfsync}
\usepackage{threeparttable}
\usepackage{color}
\usepackage[ruled, linesnumbered,lined,boxed,commentsnumbered]{algorithm2e}
\usepackage{url}

\usepackage[T1]{fontenc}
\usepackage[utf8]{inputenc}
\usepackage{authblk}

\usepackage{multirow}
\usepackage{threeparttable}

\usepackage{pgfplotstable}
\pgfplotstableset{
  empty cells with={---},
  every head row/.style={before row=\toprule,after row=\midrule},
  every last row/.style={after row=\bottomrule}
}
\pgfplotsset{compat=1.9}

\usepackage{array,booktabs}

\newcommand{\GF}[1]{\ensuremath{\mathbb{F}_{#1}}\xspace}

\newcommand{\TGF}[1]{\ensuremath{\mathbb{\widetilde{F}}_{#1}}\xspace}

\newcommand{\xor}{\xspace\texttt{XOR}\xspace}
\newcommand{\pshufb}{\texttt{PSHUFB}\xspace}
\newcommand{\vpshufb}{\texttt{VPSHUFB}\xspace}
\newcommand{\pclmul}{\xspace\texttt{PCLMULQDQ}\xspace}

\newcommand{\pointmul}{\texttt{pointmul}\xspace}

\newcommand{\lchbtfy}{\ensuremath{\texttt{FFT}_{\texttt{LCH}}}\xspace}
\newcommand{\ilchbtfy}{\ensuremath{\texttt{iFFT}_{\texttt{LCH}}}\xspace}

\newcommand{\bscvt}{\ensuremath{\texttt{BasisCvt}\xspace}}
\newcommand{\ibscvt}{\ensuremath{\texttt{iBasisCvt}\xspace}}

\newcommand{\novelpoly}{\emph{novelpoly}\xspace}

\newtheorem{theorem}{Theorem}[section]
\newtheorem{corollary}{Corollary}

\newtheorem{proposition}{Proposition}

\theoremstyle{definition}
\newtheorem{definition}[theorem]{Definition}
\newtheorem{myclaim}{Claim}

\newif\ifsubmission
%\submissiontrue
\submissionfalse

% Insert the name of "your journal" with
% \journalname{myjournal}
%
\begin{document}

\title{Faster Multiplication for Long Binary Polynomials\\
%Fast Polynomial Multiplication over \GF{2^{128}}\\
%Trade-off on the multiplications inside FFT
}
%\subtitle{Fast Polynomial Multiplication over \GF{2^{128}}}
%\titlerunning{Short form of title}        % if too long for running head

\ifsubmission
%\author{\vspace*{-1cm}}
\author{}
\institute{\vspace*{-1cm}}
\else

\author[1,3]{Ming-Shing~Chen}
\author[1]{Chen-Mou~Cheng}
\author[1,2]{Po-Chun~Kuo}
\author[2]{Wen-Ding Li}
\author[2,3]{Bo-Yin~Yang}
\affil[1]{Department of Electrical Engineering, National Taiwan University, Taiwan\\
\texttt{ \{mschen,doug,kbj\}@crypto.tw}
}
\affil[2]{Institute of Information Science, Academia Sinica, Taiwan\\
\texttt{ \{thekev,by\}@crypto.tw}
}
\affil[3]{Research Center of Information Technology and Innovation, Academia Sinica, Taiwan}

\fi

\maketitle

\begin{abstract}
  We set new speed records for multiplying long polynomials over
  finite fields of characteristic two.  Our multiplication algorithm
  is based on an additive FFT (Fast Fourier Transform) 
  by Lin, Chung, and Huang in 2014
  comparing to
  previously best results based on multiplicative FFTs.  Both
  methods have similar complexity for arithmetic operations 
  on underlying finite field;
  however, our implementation shows that the additive FFT has less overhead.
  For further optimization, we employ a tower field
  construction because the multipliers in the additive FFT
  naturally fall into small subfields, which leads to speed-ups using
  table-lookup instructions in modern CPUs.  Benchmarks show that our
  method saves about $40 \%$ computing time when multiplying
  polynomials of $2^{28}$ and $2^{29}$ bits comparing to previous
  multiplicative FFT implementations.

   %Counter-intuitively, we
%  performed most of our multiplications in binary fields using
%  table-lookup instructions in the Intel AVX2 instruction set rather
%  than the powerful \pclmul binary polynomial multiplication
%  instruction.
\end{abstract}

{\bf Keywords:} 
%\begin{keyword}
Finite Field, Multiplication, Additive FFT, Single Instruction Multiple Data (SIMD).
%\end{keyword}
%\end{IEEEkeywords}
%}

%%% Local Variables:
%%% mode: latex
%%% eval: (TeX-PDF-mode 1)
%%% eval: (TeX-source-correlate-mode 1)
%%% TeX-master: "bitpolymul.tex"
%%% End:

%\input{sec_intro}

\section{Introduction}
\label{sec:introduction}

Multiplication for
long binary polynomials in the ring $\GF{2}[x]$, where $\GF{2}$
is the finite field with two elements,
is a fundamental %and important 
problem in computer science.
It is needed to factor polynomials in number theory
\cite{vonzurGathen:1996:AFP:236869.236882}\cite{von2002polynomial}
%, to
%implement error correcting codes \textbf{TODO:citation}%\cite{weimerskirch2003generic}, 
and
is central to the modern Block Wiedemann algorithm
\cite{thome2002subquadratic}.  Block Wiedemann is in turn integral to
the Number Field Sieve \cite{aoki2007kilobit}, the critical attack
against the RSA cryptosystem.  Another application of Block Wiedemann
is the XL algorithm \cite{pxl}, which is a critical attack against
multivariate public-key cryptosystems \cite{MPKC}.

To the best of our knowledge, all currently
fast binary polynomial multiplication algorithms are based on
a Fast Fourier Transform (FFT) algorithm.
The FFT is an algorithm for efficiently
evaluating polynomials at subgroups in the underlying finite field.
%
% move to contributions.
%In this paper, we present a new method of multiplications for
%long binary polynomials based on recently developed additive FFTs.

% in the ring $\GF{2}[x]$, where $\GF{2}$
% is the finite field with two elements.  The polynomials are
% represented in bit-sequence with length $d$ which is
% $ n = \lceil d/w \rceil $ machine words ($w$ is the word size).  This

\subsection{Previous works on multiplying binary polynomials}
There have been
good reviews of algorithms and implementations for multiplying
binary polynomials in \cite{gf2x} and \cite{DBLP:conf/issac/HarveyHL16}. 

In general, these previous methods of multiplication were based on
``multiplicative'' FFTs, evaluating 
polynomials at multiplicative subgroups formed by roots of unity.
Since the sizes of multiplicative groups existing in \GF{2^k} are
restricted, multiplicative FFTs in the binary field
are somewhat more difficult than that over the reals and the complex.
 These multiplicative FFTs, evaluating polynomials at $n$ points, 
reach their best efficiency
only when $n$ is a size of a multiplicative subgroup.

In \cite{gf2x}, Brent \emph{et al.} implemented mainly the
Sch\"onhage \cite{schonhage1977} algorithm for long polynomials
with complexity $O(n \log n \log\log n)$ of field operations.  
Sch\"onhage's algorithm is
based on ternary FFTs over binary finite fields.  In their
implementation, the optimal number of evaluation points is $3^k$.

Harvey, van der Hoeven, and Lecerf \cite{DBLP:conf/issac/HarveyHL16} presented multiplication
using an FFT of a mixed radix approach.
They applied several discrete Fourier transforms(DFTs) for different
input sizes, e.g., Cooley-Tukey\cite{cooley1965algorithm} for the largest
scale DFT.
In particular, they need
to find a suitable finite field which is of a size close to a machine
word and simultaneously allows both abundant multiplicative subgroups.
%and efficient multiplication using existing machine instructions.
%Fortuitously for them, 
They proposed $\GF{2^{60}}$ which elegantly satisfies these conditions.

\subsection{Recent Progress: Additive FFTs}
Following Cantor~\cite{Cantor:1989}, alternative methods are developed to
evaluate polynomials at points that form an \emph{additive} subgroup 
in a field of characteristic 2.  These methods are called
``additive FFTs''.

Cantor presented a basis, termed ``Cantor basis'' in the literature,
for constructing a finite field as well as an FFT over the field.
His FFT works with the complexity of $O(n \log n)$ multiplications and
$O(n \log^{\log_2 3} n)$ additions(\xor) for evaluating $n=2^m$ elements.

In 2010, Gao and Mateer~\cite{gm-afft} presented an additive FFT
(heretofore ``GM FFT'') over \GF{2^k}, where the evaluation points are
an additive subgroup of size $2^k$ in the underlying GF.
  The additive subgroups are easier to form than multiplicative subgroups
in the fields of characteristic 2.  
However, the complexity in GM FFT is
$O(n \log^2 n)$ \xor operations for evaluating a polynomial at
$n = 2^m$ points in general.
It can be optimized to $O(n \log n \log \log n)$ only when $m$ is a
power of 2, and the polynomials in this case are represented in a
special polynomial basis introduced by Cantor~\cite{Cantor:1989}.
In 2014, Bernstein and Chou~\cite{auth256} presented an efficient
implementation of the GM FFT. % for multiplying polynomials in
%$\GF{256}[x]$ and applied the polynomial multiplier to the multiplication
%of \GF{2^{256}}. 
% When a Cantor basis exists, this variant actually
%consists of a basis conversion stage and a butterfly stage (instead of
%interleaved basis conversion and butterfly stages of GM FFT).  The
%asymptotic complexity is then $O(n \log^{\log_2 3} n)$ just like the
%Cantor transform.

In 2014, Lin, Chung, and Han~\cite{lch-afft}
%and subsequent work such as~\cite{Lin:BasisCvt:2016} 
proposed a more general variant (``LCH
FFT''), which uses a different polynomial basis than
the standard one.  
The polynomial basis results in a more regular structure in the
butterfly stage.  When representing the underlying finite field
in a Cantor basis, the butterfly stage is the same as the optimized
GM FFT and Bernstein-Chou.  
In the subsequent work \cite{Lin:BasisCvt:2016}, they
presented a method for converting the polynomial from standard basis.
The complexity for the conversion is $O(n \log n \log \log n)$ \xor operations
for $n$ being any power of two, which reaches the same complexity as multiplicative
FFTs.

More details of the LCH FFT are reviewed in Sec.~\ref{sec:lchfft}.

% \subsection{The cost of GF multiplication}
% \textbf{MAYBE REMOVE THIS SEC.} \\
% (Review GF multiplications.) \\
% We know the cost of multiplications can be less when
% \begin{itemize}
% \item multiplying with particular constants
% \item or multiplying elements in sub-field.
% \end{itemize}
% 
% The cost of FFT is actually heavy relying on the cost of GF MUL in FFT.
% We have to figure out a way to control the constants of multiplications in FFT 
% and thus reduce the cost of GF MUL in FFT.
% 

\subsection{Our Contributions}
In this paper, we present a faster method of multiplication for
long binary polynomials based on the recently developed LCH FFT
comparing to previously multiplicative-FFT-based algorithms.
From the faster results of our additive-FFT-based implementations,
we confirm that the recent development of additive
FFTs helps the multiplication for binary polynomials.
%
%It is important to clarify whether the recent development of additive
%FFTs would help the multiplication for binary polynomials.  We answer
%in the affirmative, since we got faster results by implementing an
%additive-FFT-based multiplication compared to earlier
%multiplicative-FFT-based multiplications.

Our implementation is faster than previous multiplicative FFT codes for two reasons.
%state-of-the-art code which have been carefully optimized over the
%last decade.  
First, the FFT in our algorithm uses simple binary
butterflies stages leading to $\frac12 n \log n$ multiplications
for $n$ evaluation points,
compared to (for example) the ternary FFT used in Sch\"onhage's algorithm
which leads to $\frac43 n \log_3 n$ multiplications.
This factor confers a 10\%--20\% advantage over multiplicative-FFT-based
implementations and will be discussed in Sec.~\ref{sec:simple:mul}.
Second, we exploit the fact that the multipliers in the additive FFT are in
subfields, which reduces the average time taken per multiplication.
We will discuss the method in Sec.~\ref{sec:gf-ver2} and \ref{sec:tower:mul}.
The overall improvement is 10\%--40\% over previous implementations.

% Because our method is a direct improvement over Bernstein-Chou, which
% essentially achieved the same things that Cantor does with the same
% complexity $O(n \log^{\log_2 3} n)=O(n \log^{1.5849\ldots} n)$, it is
% unsurprising that our method works better than Cantor's algorithm.
% 
% In contrast, our method compared to the ternary Sch\"onhage variant in
% \cite{gf2x} is like a binary DIT (decimation-in-time) FFT against a
% ternary one.  We can say that everything else being equal, the binary
% version has the advantage because simpler is better.  Our
% implementation still has room for improvement and more optimizations
% would illustrate the point even better.

% The only reason that 
 
%%% Local Variables:
%%% mode: latex
%%% eval: (TeX-PDF-mode 1)
%%% eval: (TeX-source-correlate-mode 1)
%%% TeX-master: "bitpolymul.tex"
%%% End:

%\input{sec_preliminary}

\section{Preliminaries}
\label{sec:preliminary}

\subsection{Multiplying with Segmentation of Binary Polynomials}
\label{sec:mul-seg-fft}
In this section,
we discuss the general method of multiplications for
long binary polynomials.

Suppose we are multiplying two polynomials $a(x) = a_0 + a_1 x + \cdots + a_{d-1} x^{d-1}$
and $b(x) = b_0 + \cdots + b_{d-1} x^{d-1} \in \GF{2}[x]$.
The polynomials are
represented in bit sequence with length $d$.
The standard \emph{Kronecker segmentation} for multiplying binary polynomials
is performed as follows:
\begin{enumerate}
\item Partition the polynomials to $w$-bits blocks. There are $n = \lceil d/w \rceil$ blocks.
\begin{multline*}
 a(x) = a_0 + a_1 x + \cdots  + a_{d-1} x^{d-1} \\
 \rightarrow 
 (a_0 + \cdots + a_{w-1}x^{w-1}) + ( a_{w} + \cdots + a_{2w-1} x^{w-1}) x^w + \cdots + (\cdots) x^{w(n-1)} \enspace .
\end{multline*}
% \item Let $\phi$  naturally maps polynomials over \GF{2} of degree less than $2w$ to elements in $\GF{2^{2w}}:= \GF{2}[z]/(g(z))$
% where $g(z)=z^{2w}+\cdots + 1$ is irreducible.
% \[
%  \phi: (f_0 + f_1 x + \ldots + f_{2w-1}x^{2w-1}) \mapsto (f_0 + f_1 z + \ldots + f_{2w-1}z^{2w-1}).
% \]
% Let $a'_0 = \phi(a_0 + a_1 x + \ldots + a_{w-1}x^{w-1}), a'_1 =
% \phi( a_{w} + a_{w+1} x + \ldots + a_{2w-1} x^{w-1})$, \ldots and same for $b'_i$.
\item Define $\GF{2^{2w}} := \GF{2}[z]/(g(z))$, with $g(z)$
%=z^{2w} + \cdots + 1$ 
  an irreducible polynomial of degree $2w$.
Let $\psi$ map $a(x)$ to $a'(y) \in \GF{2^{2w}}[y]$ 
(and similarly $\psi(b(x))=b'(y)$) :
\begin{eqnarray*}
  a'(y) &:=& a'_0 + a'_1 y + \cdots + a'_{n-1} y^{n-1} \in \GF{2^{2w}}[y], \\
  b'(y) &:=& b'_0 + b'_1 y + \cdots + b'_{n-1} y^{n-1} \in \GF{2^{2w}}[y], 
\end{eqnarray*}
such that  $a'_0 = (a_0 + a_1 z + \ldots + a_{w-1}z^{w-1}), a'_1 =
( a_{w} +\ldots + a_{2w-1} z^{w-1})$, \ldots, $a'_{n-1}=( a_{(n-1)w} + a_{(n-1)w+1} z + \ldots + a_{nw-1} z^{w-1})$
 and same for  $b_j$.
\item Calculate $c'(y) = a'(y) \cdot b'(y) = c'_0+c'_1 y + \cdots \in \GF{2^{2w}}[y]$ (using FFTs).
\item Map $z\mapsto x$ and $y\mapsto x^{w}$. Then, collect terms and
  coefficients to find the result of multiplication of binary
  polynomials.  We need to add together at most 2 coefficients at any power.
\end{enumerate}

\paragraph{FFT-based Polynomial multiplication.}
It is well known that 
polynomial multiplication can be done using FFT\cite{Cormen:2009:IntroAlgo}.
To multiply two degree-$(n-1)$ polynomials $ a'(y)$ and $b'(y) \in \GF{2^{2w}}[y]$
with FFT algorithms, the standard steps are as follows:
\begin{enumerate} 
\item (\texttt{fft}) Evaluate $a'(y)$ and $b'(y)$ at $2n$ points by an FFT algorithm.
\item (\texttt{pointmul}) Multiply the evaluated values pairwise together.
\item (\texttt{ifft}) Interpolate back into a polynomial of degree $\le 2n-1$ by the inverse FFT algorithm.
\end{enumerate}
The complexity of polynomial multiplication is the same as the FFT in use.

\subsection{Alternative Representations of Finite Fields}
\label{sec:gf-arithmetic}
The field of two elements, denoted as \GF{2}, is the set $\{ 0,1\}$.
The multiplication of \GF{2} is logic \texttt{AND} and addition is
logic \xor.  In this paper, every field will be an algebraic extension of \GF{2}.

We will apply interchangeable representations for each used finite
field (or Galois field, GF).  The illustrative example
is the field of $2^{128}$ elements, denoted by
$F_7$ in \cite{Cantor:1989} and \cite{gf2x}. 
%In the field, each element occupies consecutive 128 bits in the
%storage.
% Since the cost of field multiplications 
% will be dependent on the representations and the multiplicands,  
We will switch representations during computation to
achieve a better efficiency for field multiplication.
All fields of the same size are isomorphic and the cost of changing
representation is a linear transformation. 

\subsubsection{The Irreducible Polynomial Construction of \GF{2^{128}}}
\label{sec:gf-ver1}
We choose the basic working field to be the same as in AES-GCM, denoted
as $\GF{2^{128}}$:
\[
\GF{2^{128}} := \GF{2}[x]/\left( x^{128} + x^7 + x^2 + x + 1\right) \enspace .
\]
An element in $\GF{2^{128}}$ is represented as a binary polynomial
of degree $< 128$.
The $\GF{2^{128}}$ can also be a linear space of dimension 128 with the basis $(x^i)_{i=0}^{127}$.

\paragraph{The cost of multiplication for $\GF{2^{128}}$.}
There are hardware instructions for multiplying small binary polynomials
which fits for the multiplication of $\GF{2^{128}}$ in many platforms.
\pclmul is a widely used instruction for multiplying 64-bit binary polynomials in x86.
Since one \pclmul performs $64 \times 64 \rightarrow 128$ bits,
the multiplication of \GF{2^{128}} costs roughly 5 \pclmul(3 for multiplying $128$-bit
polynomials with Karatsuba's method and 2 for reducing the $256$-bit result
back to $128$ bits with linear folding).
More details about multiplications of \GF{2^{128}} can be found in \cite{pclmul:gcm}.

%
%\subsubsection{The Polynomial Basis Construction of \GF{2^{256}}}
%\label{sec:gf-ver1a}
%We will use also the following field,
%\(\GF{2^{256}} := \GF{2}[x]/\left( x^{256} + x^{10} + x^5 + x^2 +
%  1\right). \)
%which is denoted as \GF{2^{256}}.  We will never multiply two elements
%in \GF{2^{256}} as such because it is too inefficient (takes about 13 \pclmul).
%

\subsubsection{Cantor Basis for Finite Field as Linear Space}
\label{sec:gf-ver3}

Gao and Mateer presented an explicit construction of Cantor Basis for finite field
in \cite{gm-afft}.
%In the construction,
The Cantor basis $(\beta_i)$ satisfies $\beta_0=1,
\beta_i^2 + \beta_i = \beta_{i-1}$ for $i>0$. 

%\paragraph{Cantor subspace vanishing polynomials}

\begin{definition} With respect to the basis $(\beta_i)$, let $\phi_{\beta}(k):=\sum_{j=0}^{m-1} 
b_j \beta_j$ be the \emph{field element represented by $k$ under $(\beta_i)$}
when the binary expansion of $k=\sum_{j=0}^{m-1} b_j 2^j$ with $b_j \in \{ 0,1\}$.
%This is the field element represented by $i$ under $(\beta_i)$.
\end{definition}

\begin{definition}\label{def:si}
Given a basis $(\beta_i)_{i=0}^{m-1}$ in the base field,  its 
\underline{sequence of subspaces} is $V_i:=\mathrm{span}\{\beta_0,\beta_1,\ldots, \beta_{i-1}\}$.
Its \underline{subspace vanishing polynomials} $(s_i)$ are,
\[
 s_i(x) := \prod_{a \in V_i} (x-a) \enspace.
\]
\end{definition}
Note that $V_i$ is a field with linear basis $(\beta_j)_{j=0}^{i-1}$
only when $i$ is power of $2$.
Since $\dim V_i=i$, one can see that $\deg(s_i(x)) = 2^i$.

From \cite{Cantor:1989} and \cite{gm-afft}, vanishing polynomials $s_i(x)$
w.r.t.\ the Cantor basis $(\beta_i)$ has the following useful properties:
\begin{itemize}
\item (linearity) $s_i(x)$ contains only monomials in the form $x^{2^m}$.
\item (minimal two terms) $s_i(x) = x^{2^i} + x$ iff $i$ is a power of $2$. 
%That is, w.r.t.\ the basis $(\beta_i)$, of its {sequence of subspaces} $V_j$
In the cases of $i=2^k$ ,\ $V_{2^k}$ are fields and
$\{\beta_{2^k},\beta_{2^k+1},\ldots,\beta_{2^{k+1}-1}\} \subset \GF{2^{2^{k+1}}}
\backslash \GF{2^{2^k}}$.
\item (recursivity) $s_i(x) = s_{i-1}^2(x) + s_{i-1}(x) = s_1(s_{i-1}(x))$; $s_{i+j}(x)=s_i(s_j(x))$.
\end{itemize}
If
$k=2^{i_0}+2^{i_1}+\cdots+2^{i_j}$, where $i_0<i_1<\cdots<i_j$, then we
can write
$s_k(x) = s_{2^{i_0}}(s_{2^{i_1}}(\cdots(s_{2^{i_j}}(x))\cdots))$.
Therefore, every $s_i$ is a composition of functions
which only has two terms (see Table~\ref{tab:si}).

\paragraph{Evaluating $s_i(x)$ in Cantor basis}
Computing $s_i(x)$ in the Cantor basis is very fast since
%the basis is defined with $\beta_i^2 + \beta_i = \beta_{i-1}$ which is $s_1(\beta_{i})$.
the representation
  of $s_i(\alpha)$ would exactly be that of $\alpha$ shifted right by
  $i$ bits, or $s_i(\phi_\beta(j))=\phi_\beta(j \gg i)$. 
For example, we have $s_i(\beta_i) = \beta_0 = 1$.

% Note that $s_k(\beta) = \beta \gg k$(shift right $k$ bits) and $s_k(v_k) = 1$ in 
% the Cantor basis.
% Although we can keep $s_k(\beta)$ simple and short numbers in Cantor basis,
% it is random-looking 128-bit polynomials in $\GF{2^{128}}$.

\subsection{The Lin-Chung-Han (LCH) FFT}
\label{sec:lchfft}
In this section,
we introduce how to evaluate a degree %(at most) 
$(n-1)$ polynomial at $n$ points 
in a binary field with the LCH FFT. 
We assume that $n$ is a power of $2$.
The polynomial is padded with $0$ for high-degree 
coefficients if the actual degree of polynomial is not $n-1$.
%We also assume all polynomials are in $\TGF{2^{128}}[x]$ for simplicity in this section.
%

The LCH FFT requires that the evaluated polynomial is converted into a
particular kind of basis, called \novelpoly basis.

%Roughly speaking, the polynomial is first converted into one
%particular basis(for polynomials), and a butterfly process on the coefficients evaluates
%the polynomial simultaneously at $n$ points.

%\subsubsection{The \novelpoly basis w.r.t.\ the basis $(u_k)$} 
\subsubsection{The \novelpoly basis}
The polynomial basis used in LCH FFT was presented in \cite{lch-afft}.
The \novelpoly basis for polynomials must be distinguished from bases for field.
Although the LCH FFT is independent of underlying basis of
finite field, we assume that the field is in Cantor basis for simplicity.

%In practice, a polynomial $f(x)$ is usually in the form of  
%$f_0 + f_1 x + f_2 x^2 + \ldots + f_{n-1}x^{n-1}$
%which is in monomial basis $[1,x,x^2,\ldots]$.
\begin{definition}\label{def:novelpoly}
Given the Cantor basis $(\beta_i)$ for the base field and its 
vanishing polynomials $(s_i)$,
define the \underline{\novelpoly basis} w.r.t.\ $(\beta_i)$ to be the polynomials $(X_k)$
\[
X_k(x):= \prod \left(s_i(x)\right)^{b_i} \quad 
\mbox{ where } k=\sum b_i 2^i \mbox{ with } b_i \in \{ 0,1\}\enspace .
\]
I.e.,
$X_k(x)$ is the product of all $s_i(x)$ where the $i$-th bit
of $k$ is set.
\end{definition}
Since $\deg(s_i(x)) = 2^i$, clearly  $\deg(X_k(x)) =k$.  

To perform LCH FFT,
the evaluated polynomial $f(x)$ has to be converted into the form 
$f(x) = g(X) = g_0 + g_1 X_1(x) + \ldots + g_{n-1} X_{n-1}(x)$.  
%
%Following \cite{lch-afft},
%We will say that 
%$[1,X_1, X_2 , \ldots ]$ is the ``\novelpoly basis'' w.r.t.\ the basis $(u_k)$.

\subsubsection{LCH's Butterfly}
The evaluation of a polynomial in the \novelpoly basis %w.r.t.\ Cantor basis $(\beta_k)$
%After converting the basis of the polynomial $f(x)$, the evaluation of $f(x)$ can be proceeded
can be done through a ``Butterfly'' process, denoted as
\lchbtfy.\footnote{ following \cite{lch-afft}, which calls the
  butterfly an FFT.}

The general idea of evaluating $f(x)$ at all points of $V_k$ is to
divide $V_k$ into the two sets $V_{k-1}$ and 
$V_k\backslash V_{k-1} = V_{k-1}+\beta_{k-1}:=  \{x+\beta_{k-1}: x\in V_{k-1} \}$. %\footnote{
%Here $V_{k-1}+v_{k-1} :=  \{x+v_{k-1}: x\in V_{k-1} \} $.
%}.
Since $s_k(x)$ is linear, %$s_k(V_{k-1} + v_{k-1}) = s_k(V_{k-1}) + s_k(v_{k-1}) = s_k(v_{k-1})$.
evaluations at $V_{k-1}+\beta_{k-1}$
can be quickly calculated with the evaluations at $V_{k-1}$
and the butterfly process.
It is a divide-and-conquer process that the polynomial $f(x) = g(X)$ can be
expressed as two half-sized polynomials $h_0(X)$ and $h_1(X)$ with
$g(X) = h_0(X) + X_{2^{\lceil \log n \rceil-1}} (x) h_1(X)$.
%
%The detail of \lchbtfy is given in Algorithm~\ref{alg:fft}.
%We call $i$ the representation of the field element $\phi_u(i)$.

\lchbtfy is detailed in Algorithm~\ref{alg:fft}.
The \lchbtfy evaluates the converted polynomial $g(X)$ at points $V_{\log^{n}} + \alpha$.
Line 5 and 6 perform the butterfly process (see
Fig.~\ref{fig:butterfly-detail} and Fig.~\ref{fig:example:butterfly:polymul}).
%The inverse LCH FFT simply performs basis conversion and butterflies in 
%reverse.
Inverse $\lchbtfy$ simply performs the butterflies in reverse.

\begin{algorithm}[!tbh]
\SetKwFunction{nfft}{\ensuremath{\texttt{FFT}_{\texttt{LCH}}}}
\SetKwInOut{Input}{input}
\SetKwInOut{Output}{output}
\nfft{$f(x) = g(X), \alpha $} : \\
\Input{ a polynomial: $ g(X) = g_0 + g_1 X_1(x) + ... + g_{n-1} X_{n-1}(x) $ .\\
        a scalar: $ \alpha \in \GF{} $ .\\}
\Output{ a list: $ [ f( 0 + \alpha ),f( \phi_u(1) + \alpha ), \ldots , f( \phi_u(n-1) + \alpha)  ] $ .}
\BlankLine
\lIf { $ \deg( f(x) ) = 0 $ }{ return $[ g_0 ]$ }
Let $k \gets \text{Max}(i)$ s.t. $2^i \leq n-1$ . \\
Let $ g(X) = p_0(X) + X_{2^k} \cdot p_1(X) = p_0(X) + s_k(x) \cdot p_1(X) $. \\
$h_0(X) \gets p_0(X) + s_k(\alpha) \cdot p_1(X) $. \\
$h_1(X) \gets h_0(X) + s_k(\beta_k) \cdot p_1(X) $. \textbf{//} $s_k(\beta_k) = 1$ in Cantor basis. \\
% In the LCH paper, s_k(x) is normalized to s_k(x)/s_k(v_k), is this normalization mentioned in this paper?
% if not , I thought the equations should be
% $h_0(X) \gets p_0(X) + s_k(\beta)} \cdot p_1(X) $. \\
% $h_1(X) \gets h_0(X) + s_k(v_k) \cdot p_1(X) $. \\
% ???
return $[$ \nfft{$  h_0(X) , \alpha $},\nfft{$ h_1(X) , \beta_k + \alpha $} $]$
\caption{LCH FFT in \novelpoly.}
\label{alg:fft}
\end{algorithm}

We note there are two multipliers $s_k(\alpha)$ and $s_k(\beta_k)$ in the \lchbtfy and
$s_k(\beta_k) = 1$ in Cantor basis, avoiding one multiplication.
This constant reduction of multiplications won't affect the asymptotic complexity; however,
one can not bear the extra multiplication in practice.
Although \lchbtfy is applicable to any basis of field, the
choice for a practitioner might be Cantor basis only.

% The details of the \lchbtfy is shown in Algorithm~\ref{alg:fft}.
% %
% Line 4 and 5 perform the ``butterfly process'' (see
% Fig.~\ref{fig:butterfly-detail}) for evaluating at 2 points by 1
% multiplication.
% %
% Figure~\ref{fig:butterfly-detail} is also a simple example for
% evaluating a degree-1 polynomial $f_0 + f_1 x$ at points ${c , c+1}$.
% %
% Here $f(x)$ and $g(X)$ are identical since $X_1(x) = x$.
% %
% \begin{figure}[!tbh]
%   \centering
%     \includegraphics[width=0.5\textwidth]{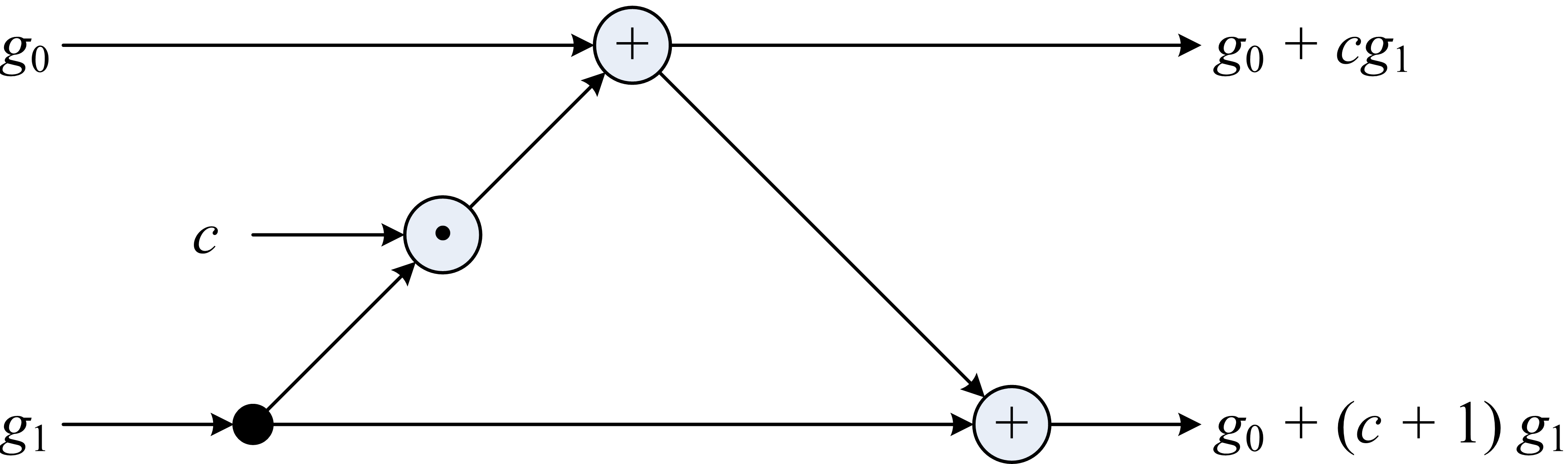}
%     \caption{Details of the butterfly unit, $c=s_k(\beta)$ since $s_k(v_k)=1$.}
%     \label{fig:butterfly-detail}
% \end{figure}
% 

\subsection{Conversion to \novelpoly Basis w.r.t.\ Cantor Basis}
%\subsubsection{Conversion to \novelpoly Basis}
\label{sec:basis-conversion}

% Lin \emph{et al.} wrote down an algorithm for basis conversion
% (heretofore abbreviated BC) between monomial and \novelpoly bases in
% \cite{Lin:BasisCvt:2016}.  It costs $\Theta(n \log n \log \log n)$
% \xor operations , where $n$ is number of terms of converted
% polynomial.  Here we describe the LCH BC using
% our notations and setting.
%
%We first carry out
%a sequence of additions for converting the polynomial to a polynomial basis,
%presented by Cantor\cite{Cantor:1989}, in $\Theta(n \log n \log \log n)$ operations (see Fig.~\ref{fig:example:optbscvt}) 
%and follow up with a $\Theta(n\log n)$ butterfly network much like 
%the standard FFT (in Fig.~\ref{fig:example:butterfly:polymul}). 

The evaluated polynomial has to be in \novelpoly basis for performing \lchbtfy.
We review the conversion algorithms in this section.
The fast conversion relies on the simple form of $(s_i)$ w.r.t. Cantor basis.
%Again, there is a fast conversion in Cantor basis because of the simple form 
%of $s_i(x)$.

\cite{auth256} converts $f(x)$ to $g(X)$ by finding the
largest $i$ such that $2^i < \deg f$, and then divide $f(x)$ by
$s_i(x)$ to form $f(x) = f_0(x) + s_i(x) f_1(x)$.  Recursively divide
$f_0(x)$ and $f_1(x)$ by lower $s_i(x)$ and eventually express $f(x)$
as a sum of non-repetitive products of the $s_i(x)$, which is the
desired form for $g(X)$.  We know the division comprise only \xor
operations since the coefficients of $s_i(x)$ are always $1$ in Cantor basis.
Therefore the complexity of division by $s_i(x)$ depends on the number
of terms in $s_i(x)$.  This is
functionally equivalent to the Cantor transform.

\begin{table}
\caption{Variable Substitution of $s_i(x)$}
\label{tab:si}
%\small
\footnotesize
\begin{tabular}{c| r| l | l }
$ s_0(x) $ & $ x $         &   &  \\
$ s_1(x) $ & $ x^2 + x $   &  &  \\
\hline
$ s_2(x) $ & $ x^4 + x $   &  $  = s_{2}(x) = y$ & \\
$ s_3(x) $ & $ x^8 + x^4 + x^2 + x $ &  $ = s_{1}(y) = y^2 + y$ & \\
\hline
$ s_4(x) $ & $ x^{16} + x $     &   $ = s_{4}(x) = z$ &  \\
$ s_5(x) $ & $ x^{32} + x^{16} + x^2 + x $  &  $ = s_{1}(z) = z^2 + z $ & \\
$ s_6(x) $ & $ x^{64} + x^{16} + x^4 + x $  &  $ = s_{2}(z) = z^4 + z $ & $= s_{6}(x) = w $\\ 
%$ s_7(x) $ & $ x^{128} + x^{64} + x^{32} + x^{16} + x^8 + x^4 + x^2 + x $     &  $=s_{43}(y) = y^8 + y^4 + y^2 + y $ & $=s_{421}(z) = z^2 + z $\\
$ s_7(x) $ & $ x^{128} + x^{64} + \cdots + x^2 + x $  &  $=s_{3}(z) = z^8 + z^4 + z^2 + z $ & $=s_{1}(w) = w^2 + w $\\
\end{tabular}
\end{table}

%We would like to do better and want to perform only two-terms division
%in the basis conversion and can do that thanks to the recursive and
%minimal two terms properties of $s_k$.

%However, number of terms of $s_i(x)$ also effect the cost of division.
%Another useful observations is that 
%\begin{enumerate}
%\item $s_i(x)$ contains only $2$ terms for $i$ is power of $2$ and 
%\item the recursive structure of $s_i(x) = s_{i-1}^2(x) + s_{i-1}(x)$.
%\end{enumerate}

\cite{Lin:BasisCvt:2016} does better by
\begin{enumerate} \setlength{\itemsep}{0in}
\item finding the largest $2^{2^i}$ such that $2^{2^j}<\deg f$ and then do
  variable substitution (Alg.~\ref{alg:varSubs}) to express $f$ as a
  power series of $s_{2^i}$.
\item Recursively express the series in $s_{2^i}$ as a series in $X_j(s_{2^i})$,
  where $j<2^{2^i}$.  %Note that $X_j(s_{2^k})=X_{j2^k}$.
\item Recursively express each coefficient of $X_j(s_{2^i})$ (which is
  a polynomial in $x$ of degree $< 2^{2^i}$) as a series in $X_k$, where $k<2^{2^i}$.
\end{enumerate}

% 
% With the recursive and minimal two terms properties of $s_i(x)$,
% one can always perform only division by two-terms polynomial in
% the process of conversion.
% For dividing those $s_i(x)$ which terms greater than 2, one can represent
% them with $s_j(x)$ which has only two terms and $j$ is a power of $2$.
% By technique of variable substitution in Algo.~\ref{alg:varSubs},
% one can alway perform the division with 2-terms $s_i(x)$.
%
\begin{algorithm}[!h]
\SetKwFunction{vs}{VarSubs}
%\SetKwFunction{deg}{Degree}
\SetKwInOut{Input}{input}
\SetKwInOut{Output}{output}
\vs{ $f(x), y $ } : \\
%\vs : \\
\Input{ Two polynomials: $ f(x) = f_0 + f_1 x + ... + f_{n-1} x^{n-1} $ and $ y = x^{2^i} + x$ .}
\Output{ $ h(y) = h_0(x) + h_1(x)y + \cdots + h_{m-1}(x) y^{m-1} $ .}
\BlankLine
\lIf { $ \deg( f(x) ) < 2^i $ }
{
  return $h(y) \gets f(x)$
}
Let $ k \gets \text{Max}(2^j)$ where $j \in \mathbb{Z} \quad \text{s.t. } \deg( (x^{2^i} + x)^{2^j} ) \leq \deg( f(x) )$  .\\
Let $y^k \gets x^{k2^i} + x^k$. \\
Compute $ f_0(x) + y^k \cdot f_1(x) = f(x)$ by dividing $f(x)$ by $x^{k 2^i } + x^k$. \\
\textbf{//} Note that this is done by repeatedly subtracting. \\
return \vs{ $f_0(x), y $ } $\, + y^k \cdot $ \vs{ $f_1(x), y $ } .
\caption{Variable Substitution}
\label{alg:varSubs}
\end{algorithm}

The detail of basis conversion is given in Algorithm~\ref{alg:changeBasis2} and
an example is given in Appendix~\ref{appendix:basis-conversion}.
Note that the algorithms %is independent of the field construction(\textbf{TODO: but $s_i$ is dependent}).
rely on the simple form of $(s_i)$ instead of field representations of coefficients.
\begin{algorithm}[!h]
\SetKwFunction{vs}{VarSubs}
\SetKwFunction{cvt}{BasisCvt}
% \SetKwFunction{deg}{Degree}
\SetKwInOut{Input}{input}
\SetKwInOut{Output}{output}
\cvt{$f(x)$} : \\
\Input{ $ f(x) = f_0 + f_1 x + ... + f_{n-1} x^{n-1} $ .}
\Output{ $ g(X) = g_0 + g_1 X_{1}(x) + ... + g_{n-1} X_{n-1}(x) $ .}
\BlankLine
\lIf { $ \deg( f(x) ) \leq 1 $ }
{
  return $g(X) \gets f_0 + X_1 f_1$
}
Let $ k \gets \text{Max}(2^i)$ where $i \in \mathbb{Z} \quad \text{s.t. } \deg (s_{2^i}(x)) \leq \deg (f(x)) $ .\\
Let $y \gets s_k(x)$. \\
$h(y) = h_0(x) + h_1(x)y + \cdots + h_{m-1}(x) y^{m-1} \gets $ \vs{ $f(x),y$ } .\\
$h'(Y) = q_0(x) + q_1(x) X_{2^k} + \cdots + q_{m-1}(x) X_{(m-1)\cdot 2^{k}} \gets $ \cvt{ $h(y)$ } . \\
\ForEach{ $q_i(x) \text{ in } h'(Y)$ } { Compute $g_i(X) \gets $ \cvt{ $q_i(x)$ } . } 
return $ g(X) = g_0(X) + g_1(X) X_{2^k} + ... + g_{n-1}(X) X_{(m-1)\cdot 2^k} $ \\
\caption{Basis conversion: monomial to \novelpoly w.r.t\ Cantor.}
\label{alg:changeBasis2}
\end{algorithm}

%  tower representation of GF is not necessary for BC.
% Since BC performs only \xor operations which is a linear operation.
% One can always assume the representations is changed before conversion
% and changed back after BC without really performing field isomorphism.
% 

%%% Local Variables:
%%% mode: latex
%%% eval: (TeX-PDF-mode 1)
%%% eval: (TeX-source-correlate-mode 1)
%%% TeX-master: "bitpolymul.tex"
%%% End:

%\input{sec_method.tex}

\section{Binary Polynomial Products with Additive FFT}
\label{sec:method}

We can have a fast multiplication for binary polynomials simply
by applying LCH FFT with the Cantor basis as the underlying FFT
in the general method of Sec.~\ref{sec:mul-seg-fft}.

Besides the straightforward method, we also present a faster
algorithm by accelerating the field multiplication
in the FFT.
The acceleration relies on a special tower field representation,
 making all multipliers in butterflies short.
%We can perform faster sub-field multiplications in LCH FFT by
%because the multipliers are short in this tower representation.

\subsection{A Simple Method of Multiplying Binary Polynomials}
\label{sec:simple:mul}

A simple version of our multiplication for binary polynomials is
to keep the working field in the representation of polynomial basis
and apply LCH FFT with evaluation points in Cantor basis to
the general multiplication in Sec.~\ref{sec:mul-seg-fft}.
The details of the straightforward algorithm is presented 
in Alg.~\ref{alg:bitpolymul-simple}.

For more details of \lchbtfy, we
%\begin{enumerate} \setlength{\itemsep}{0in}
%\item use $\GF{2^{128}}$ as our base field;
%\item represent $s_k(\beta)$ in $\GF{2^{128}}$ with linear transformation;
%\item perform $\GF{2^{128}}$  field multiplications with 5 \pclmul (cf.~
%  Sec.~\ref{sec:gf-ver1}).
%\end{enumerate}
choose $w=64$ and use $\GF{2^{128}}$ as our base field.
The evaluated points are $\{ \phi_\beta(0),\ldots,\phi_\beta(2n-1)\}$, which
can be seen on line~6 in Alg.~\ref{alg:bitpolymul-simple}.
The multiplier $s_k(\alpha)$ is calculated in Cantor basis and then switched
to its representation in $\GF{2^{128}}$ with a linear map (field isomorphism).
Although the multipliers in the Cantor basis are short numbers,
they are random-looking 128-bit polynomials in $\GF{2^{128}}$.
The field multiplication in $\GF{2^{128}}$ is performed with 5 \pclmul (cf.~
Sec.~\ref{sec:gf-ver1}).
The other multiplier $s_k(\beta_k) = 1$ in the Cantor basis.

%
% 
% However, we still use the same multipliers $ s_k(\beta) \in \TGF{2^{128}}$
% in \lchbtfy because of the less storage for the constants 
% in this form(See Sec.\ref{sec:calc-ska}).
% The changing representation for these constants to $\GF{2^{128}}$
% are efficient since the constants are controlled small.
% This version is more straightforward but less efficient than 
% Algo.~\ref{alg:bitpolymul} in our implementations.
%

\begin{algorithm}[!tbh]
\SetKwFunction{bpolymul}{\ensuremath{\texttt{binPolyMul}}}
\SetKwFunction{split}{\ensuremath{\texttt{Split}}}
\SetKwFunction{combine}{\ensuremath{\texttt{InterleavedCombine}}}
\SetKwFunction{changerep}{\ensuremath{\texttt{changeRepr}}}
\SetKwFunction{bc}{\ensuremath{\texttt{BasisCvt}}}
\SetKwFunction{nfft}{\ensuremath{\texttt{FFT}_{\texttt{LCH}}}}
\SetKwFunction{ibc}{\ensuremath{\texttt{iBasisCvt}}}
\SetKwFunction{infft}{\ensuremath{\texttt{iFFT}_{\texttt{LCH}}}}
\SetKwInOut{Input}{input}
\SetKwInOut{Output}{output}
\bpolymul{ $a(x),b(x)$ } : \\
\Input{ $a(x) , b(x) \in \GF{2}[x] $ .}
\Output{ $ c(x) = a(x)\cdot b(x) \in \GF{2}[x]$ .}
\BlankLine
%// Partition the polynomials to 64-bits blocks: \\
$f_a(x) \in \GF{2^{w}}[x] \gets \split( a(x) )$ . \\
$f_b(x) \in \GF{2^{w}}[x] \gets \split( b(x) )$ . \\
%// BC in $\GF{2^{64}}[x]$.
$g_a(X) \in \GF{2^{w}}[X] \gets \bc( f_a(x) ) $ . \\
$g_b(X) \in \GF{2^{w}}[X] \gets \bc( f_b(x) ) $ . \\
$[ f_a(0),\ldots,f_a( \phi_\beta(2n-1) )] \in \GF{2^{2w}}^{2n} \gets \nfft( g_a(X) , 0 ) $. \\
$[ f_b(0),\ldots,f_b( \phi_\beta(2n-1) )] \in \GF{2^{2w}}^{2n} \gets \nfft( g_b(X) , 0 ) $. \\
$ [ f_c(0),\ldots,f_c(\phi_\beta(2n-1)) ] \in \GF{2^{2w}}^{2n} \gets 
  [f_a(0) \cdot f_b(0),\ldots,f_a(\phi_\beta(2n-1))\cdot f_b(\phi_\beta({2n-1})) ] $ \\
$ g_c(X) \in \GF{2^{2w}}[X] \gets \infft( [ f_c(0), \ldots , f_c( \phi_\beta({2n-1}) )] ) $ . \\
$ f_c(x) \in \GF{2^{2w}}[x] \gets \ibc( g_c(X) ) $ . \\
$ c(x) \in \GF{2}[x] \gets \combine( f_c(x) ) $ . \\
return $ c(x) $.
%\caption{Multiplications for binary polynomials.}
\caption{Simple multiplications for binary polynomials.}
\label{alg:bitpolymul-simple}
\end{algorithm}

\paragraph{Advantages of the Additive FFT}
We can expect that the simple structure of the LCH FFT leads to a
lower complexity.  One way is to count the butterflies.  LCH FFT has a
binary structure, which means at each of $\log n$ layers there are
$n/2$ butterflies for a total of $\frac12 n\log n$ multiplications.
Considering a ternary FFT instead, there will be $\log_3 n$ layers
of analogous structure to butterflies, in number $n/3$ each.  At
each of these structures, one has to make four multiplications for a
total of $\frac43 n \log_3 n$ multiplications.  All else
being equal, the multiplicative complexity of binary
structure of the additive FFT is about $1.68$ times lower than that of
the ternary FFT, which is used in the Sch\"ohage-like algorithm in
\cite{gf2x}.  Similarly we hold an advantage over the even more
complex FFT method in \cite{DBLP:conf/issac/HarveyHL16}.

\paragraph{Results} Please refer to Tab.~\ref{tab:bitpolymul} in Sec.~\ref{sec:results}.
Simply using an additive FFT confers a 10\%--20\% advantage over 
state-of-the-art libraries in \cite{DBLP:conf/issac/HarveyHL16,gf2x}.

\subsection{The Tower Construction for Binary Finite Fields}
\label{sec:gf-ver2}
%We can also construct the desired fields with a recursive structure.
%
%\subsubsection{Towers of Binary Fields}
We consider this sequence of extension fields. %from \GF{2},
%\GF{256} is represented as
{\footnotesize
\[
% \begin{array}{rclrcl}
% \GF{4} = \TGF{2^{2}} & := & \GF{2}[x_1]/(x_1^2 + x_1 + 1), &
% \GF{16} = \TGF{2^{4}} & := & \GF{4}[x_2]/(x_2^2+x_2+x_1), \\
% \GF{256} = \TGF{2^{8}} & := & \GF{16}[x_3]/(x_3^2+x_3+x_2x_1), \quad&
% \GF{65536} = \TGF{2^{16}} & := & \GF{256}[x_4]/(x_4^2+x_4+x_3x_2x_1), \\
%  \TGF{2^{32}} & := & \GF{65536}[x_5]/(x_5^2+x_5+x_4x_3x_2x_1), \quad&
%  \TGF{2^{64}} & := & \TGF{2^{32}}[x_6]/(x_6^2+x_6+x_5x_4x_3x_2x_1), \\
% \TGF{2^{128}} &:=& \TGF{2^{64}}[x_7]/(x_7^2+x_7+x_6x_5x_4x_3x_2x_1), & 
%  \TGF{2^{256}} &:=& \TGF{2^{128}}[x_8]/(x_8^2+x_8+x_7x_6x_5x_4x_3x_2x_1). \\
\begin{array}{ rlrl}
 \GF{4} = \TGF{2^{2}} := & \GF{2}[x_1]/(x_1^2 + x_1 + 1), &
  \TGF{2^{32}} := & \TGF{2^{16}}[x_5]/(x_5^2+x_5+ \prod_{i=1}^{4} x_i), \\
 \GF{16} = \TGF{2^{4}} := & \GF{4}[x_2]/(x_2^2+x_2+x_1), &
  \TGF{2^{64}} := & \TGF{2^{32}}[x_6]/(x_6^2+x_6+ \prod_{i=1}^{5} x_i ), \\
 \GF{256} = \TGF{2^{8}} := & \GF{16}[x_3]/(x_3^2+x_3+x_2x_1), &
 \TGF{2^{128}} := & \TGF{2^{64}}[x_7]/(x_7^2+x_7+ \prod_{i=1}^{6} x_i), \\
 \TGF{2^{16}} := & \GF{256}[x_4]/(x_4^2+x_4+x_3x_2x_1), &
  \TGF{2^{256}} := & \TGF{2^{128}}[x_8]/(x_8^2+x_8+ \prod_{i=1}^{7} x_i). \\
\end{array}
\]
} Thus, decimal subscripts or a tilde %\footnote{ $2^{128}$ is simply too
% large to represented as a decimal number.} 
denotes the field is in tower
representation.  We can now define a basis for \TGF{2^{256}} as a
vector space over \GF2.

\begin{definition}%[Tower basis]
\label{def:v:basis}
%\textbf{Define.} %$x_0=v_0=1$,
%$v_k := \prod_{j=0}^{m-1} x_{j+1}^{b_j}$ where $k \in \mathbb{Z}$, $k:=\sum_{j=0}^{m-1} b_j 2^j$ with $b_j \in \{ 0,1\}$.
$v_k := \prod_{j=0}^{m-1} x_{j+1}^{b_j}$ where $k:=\sum_{j=0}^{m-1} b_j 2^j$ with $b_j \in \{ 0,1\}$.
\end{definition}
By definition, the sequence  %we have sequence of $v_k$ as 
$(v_0,v_1,v_2,\allowbreak v_3,v_4,v_5,\ldots) := (1,x_1,x_2, \allowbreak x_2x_1,x_3
,\allowbreak x_3x_1, \ldots)$. %while representing in polynomial form.
%
%We also represent $v_k$ as a vector over $\GF{2}$ with only $k$-th element is $1$.
%
Henceforth \emph{$(v_k)$ will be our default basis unless otherwise specified.}

% 
% one can express every element $a\in \TGF{2^k}$ as
% the combination $a = \sum_{j=0}^{k-1} b_j v_j$.  We can then represent
% $a$ as the number $\sum_{j=0}^{k-1} b_j 2^j$. 

\begin{definition}
$\overline{\imath}:=\phi_v(i)$ is the element of \TGF{2^k} represented by $i$.
Numbers in hex such as \texttt{0x1f} also denote the
representatives under the basis $(v_i)$.
% 
% In the rest of this paper,
% a  number in hex or a number with an overline indicates the element of \TGF{2^k}
% that corresponds to the number.
\end{definition}
Hence the sequence $(v_k)$ can also be written as 
$(v_0,v_1,v_2,\allowbreak v_3,\ldots) := (1,\overline{2},\overline{4},\allowbreak \overline{8},\ldots)$ or
$(1,\texttt{0x2},\texttt{0x4},\allowbreak \texttt{0x8},\ldots)$.
%denote an element $a = \sum_{j=0}^{k-1} b_j v_j \in \TGF{2^k}$ 
%as $\overline{\sum_{j=0}^{k-1} b_j 2^j}$ or in hex without the overline.
For example, $ x_2 x_1 + x_2 + x_1 + 1 = v_0 + v_1 + v_2 + v_3 \in \GF{16}$ is denoted
as $\overline{15}$ or \texttt{0xf}.  
% \footnote{ which is better for counting their width in bits.  }
Under this notation, we can order two elements in the tower field and
thus define ``big'' or ``small'' by comparing their representative
numbers.
%
%In this sequence of extensions of tower fields, 
We can see that the representation of
each field is embedded in the lower bit(s) of the field that is twice
as wide.
For example, the elements $\{ \texttt{0x00}, \texttt{0x01},\ldots,\texttt{0x0f} \}$ in \GF{256}
form the subfield \GF{16}.
Therefore, a ``small'' number in the tower field is
often in a subfield.

\subsubsection{Compatibility between Tower and Cantor Bases}
%\textbf{TODO: Resolve the compatibility of $s_i(x)$ and $V_i$ in this section.}

We discuss the ``compatibility'' between tower and Cantor bases in this section.
It is clear the basis $(v_0,v_1,\ldots)$ is not a Cantor basis since
$v_3^2 + v_3 \neq v_2 $.
%\textbf{TODO:However, the space $V_k$ spanned from both basis are the same.}
However, %we  will eventually prove that
% Although the bases are not the same between tower field and Cantor basis,
% the space $V_i$ from our explicit tower field construction is the same as
% the subspaces constructed from a Cantor basis (see
% Prop.~\ref{lemma:SPderivative} below).  In this way the tower
% construction is compatible to Cantor's field construction in
% \cite{Cantor:1989}.  
\begin{myclaim}
\label{lemma:space:equal}
  The sequence of subspaces $(V_k)$ and subspace vanishing polynomials
  $(s_i)$ are the same w.r.t.\ the Cantor basis $(\beta_j)$ and the
  tower basis $(v_j)$.
\end{myclaim}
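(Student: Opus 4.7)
My plan is to reduce the statement to showing $V_k^{\text{tower}} = V_k^{\text{Cantor}}$ as subsets of the field for every $k$; once this is established, the polynomial equality $s_k^{\text{tower}} = s_k^{\text{Cantor}}$ follows immediately from the common definition $s_k(x) = \prod_{a \in V_k}(x-a)$. The critical observation is that $s_k^{\text{Cantor}}$ is \emph{intrinsic}: by the recursivity property already recalled, $s_k^{\text{Cantor}}$ equals the $k$-fold composition of $s_1(x) = x^2 + x$, and this polynomial in $\GF{2}[x]$ depends only on $k$, not on any basis. An easy induction on the recursion shows that each $s_k^{\text{Cantor}}$ is a linearized polynomial whose coefficient of $x$ is $1$; its formal derivative is therefore $1$, so it is separable of degree $2^k$, and its root set is exactly $V_k^{\text{Cantor}}$. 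Since $V_k^{\text{tower}} = \mathrm{span}\{v_0, \ldots, v_{k-1}\}$ also has cardinality $2^k$, it suffices to show that $s_k^{\text{Cantor}}$ vanishes on every $v_j$ with $j < k$.

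This in turn reduces to the key lemma: $s_1(v_j) \in V_j^{\text{tower}}$ for every $j \geq 0$. Granting the lemma, $\GF{2}$-linearity of $s_1$ together with $V_k^{\text{tower}} = \mathrm{span}\{v_0, \ldots, v_{k-1}\}$ yields $s_1(V_k^{\text{tower}}) \subseteq V_{k-1}^{\text{tower}}$ for every $k \geq 1$, and iterating $k$ times forces $s_k^{\text{Cantor}}(V_k^{\text{tower}}) \subseteq V_0^{\text{tower}} = \{0\}$, which is exactly the required containment.

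I prove the lemma by strong induction on $j$. The case $j = 0$ is immediate. When $j = 2^i$ is a power of two, $v_{2^i} = x_{i+1}$, and the tower's defining relation $x_{i+1}^2 + x_{i+1} = \prod_{\ell=1}^{i} x_\ell = v_{2^i - 1}$ closes the case directly (with the convention that the empty product is $1 = v_0$ when $i=0$). For a general $j$, let $2^i$ be the highest power of two not exceeding $j$ and write $j = 2^i + k'$ with $0 \leq k' < 2^i$, so that $v_j = x_{i+1} \, v_{k'}$. Setting $a := x_{i+1}$ and $b := v_{k'}$ and using $a^2 = a + v_{2^i - 1}$, a direct expansion gives $v_j^2 + v_j = a(b^2 + b) + v_{2^i - 1} \cdot b^2$. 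The inductive hypothesis places $b^2 + b$ inside $V_{k'}^{\text{tower}}$, and since left-multiplication by $x_{i+1}$ sends $v_\ell$ (for $\ell < 2^i$) to $v_{2^i + \ell}$---it merely turns on bit $i$ of the index---the summand $a(b^2 + b)$ lies in $\mathrm{span}\{v_{2^i}, \ldots, v_{j-1}\} \subseteq V_j^{\text{tower}}$. The second summand $v_{2^i - 1} \cdot b^2$ is a product of two elements of the subfield $V_{2^i}^{\text{tower}} = \TGF{2^{2^i}}$, which is closed under multiplication, so it remains inside $V_{2^i}^{\text{tower}} \subseteq V_j^{\text{tower}}$. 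The main obstacle is precisely this decomposition: recognising $v_j^2 + v_j$ as the sum of an ``upper-half'' piece in $\mathrm{span}\{v_{2^i}, \ldots, v_{j-1}\}$ (visible from the inductive hypothesis together with the multiplicative action of $x_{i+1}$ on the tower basis) plus a piece lying in the subfield $\TGF{2^{2^i}}$, after which the defining relations of the tower close the argument.
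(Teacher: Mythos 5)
Your proof is correct, but it follows a genuinely different route from the paper's. The paper keeps the vanishing polynomials defined from the tower spans, proves $s_k(v_k)=1$ (Proposition~\ref{lemma:subPolyEval}) by an induction that invokes Galois theory and a trace computation, deduces from it that the tower-basis $s_k$ obey the same recursion $s_{k+1}=s_k^2+s_k$ as the Cantor-basis ones, concludes the polynomials coincide, and only then recovers equality of the subspaces via unique factorization. You argue in the opposite direction: you treat the Cantor $s_k=s_1^{\circ k}$ as an intrinsic, basis-free polynomial, show $\mathrm{span}\{v_0,\dots,v_{k-1}\}$ sits inside its root set by the key lemma $v_j^2+v_j\in V_j$, and finish with a cardinality count, after which equality of the $s_k$ is immediate. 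Your key lemma is exactly Proposition~\ref{lemma:SPderivative}, which the paper proves later (by essentially the same induction as yours, via $v_i^2+v_i=(x_k^2+x_k)v_j^2+x_k(v_j^2+v_j)$) for the separate purpose of showing the butterfly multipliers shorten; so your route reuses a fact the paper needs anyway and avoids the trace/Galois argument entirely, making it more elementary. What it does not directly deliver is the identity $s_k(v_k)=1$ itself, which the paper wants for the FFT (to save one multiplication per butterfly), though it also follows cheaply from your lemma: $s_k(V_{k+1})\subseteq V_1=\{0,1\}$ by iterating $s_1$, and $v_k\notin V_k=\ker s_k$, so $s_k(v_k)=1$. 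One stylistic remark: the separability digression is unnecessary, since $s_k$ is by definition the product of the distinct linear factors $x-a$ over $a\in V_k$ (Cantor), so its root set has exactly $2^k$ elements without appeal to the formal derivative; this does not affect correctness.
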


%\begin{definition}
%\label{def:v:space}
We first assume that
$V_k := \mathrm{span}(v_0,\ldots,\, v_{k-1})$ and
$s_k(x)=\prod_{c\in V_k}(x-c)$ and we will show that these are the same
as those from a Cantor basis.
%\end{definition}
Note again that $V_k$ is a field only for $k=2^m$ a power of two, and only
in this case we have $V_k =\TGF{2^{k}}= \TGF{2^{2^{m-1}}}[x_m]/(x_m^2+x_m+v_{2^m-1})$.

\begin{proposition}
\label{lemma:subPolyEval}
$s_k( v_k ) = 1$.
\end{proposition}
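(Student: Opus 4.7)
The plan is to reduce to the power-of-two case $k = 2^m$ via the recursivity of subspace polynomials, and then to dispatch that case using the Frobenius on the tower extension. Writing the binary expansion $k = 2^{i_0} + 2^{i_1} + \cdots + 2^{i_j}$ with $i_0 < \cdots < i_j$, the composition identity $s_{a+b} = s_a \circ s_b$ (valid for any flag of $\GF{2}$-subspaces, not just the Cantor one) gives $s_k = s_{2^{i_0}} \circ s_{2^{i_1}} \circ \cdots \circ s_{2^{i_j}}$, so it suffices to track what each outer $s_{2^{i_\ell}}$ does.

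For the base case $k = 2^m$, I would first note that $v_{2^m} = x_{m+1}$ by Definition~\ref{def:v:basis} and that $V_{2^m} = \TGF{2^{2^m}}$ is a field of order $2^{2^m}$. Both $s_{2^m}(y)$ and $y^{2^{2^m}} + y$ are monic of degree $2^{2^m}$ and vanish on $V_{2^m}$, so they coincide. It remains to compute $x_{m+1}^{2^{2^m}}$: since $x_{m+1}$ satisfies $y^2 + y + v_{2^m - 1} = 0$ over $\TGF{2^{2^m}}$, its two Galois conjugates in $\TGF{2^{2^{m+1}}}$ are $x_{m+1}$ and $x_{m+1} + 1$, and the non-trivial element of $\mathrm{Gal}(\TGF{2^{2^{m+1}}}/\TGF{2^{2^m}})$ is $y \mapsto y^{2^{2^m}}$. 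Hence $x_{m+1}^{2^{2^m}} = x_{m+1} + 1$, which yields $s_{2^m}(x_{m+1}) = 1$.

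The key calculation for the reduction is $s_{2^{i_j}}(v_k) = v_{k - 2^{i_j}}$ whenever $2^{i_j}$ is the top bit of $k$. By Definition~\ref{def:v:basis}, $v_k = x_{i_j+1} \cdot v_{k - 2^{i_j}}$, and $v_{k - 2^{i_j}} = \prod_{\ell < j} x_{i_\ell + 1}$ lies in $\TGF{2^{2^{i_j}}}$ because each $i_\ell + 1 \leq i_j$. Thus the Frobenius $y \mapsto y^{2^{2^{i_j}}}$ fixes $v_{k - 2^{i_j}}$ and sends $x_{i_j+1}$ to $x_{i_j+1} + 1$ by the same Galois argument as in the base case, so
\[
s_{2^{i_j}}(v_k) = v_k^{2^{2^{i_j}}} + v_k = (x_{i_j + 1} + 1)\,v_{k - 2^{i_j}} + x_{i_j+1}\,v_{k - 2^{i_j}} = v_{k - 2^{i_j}}.
\]
Iterating this identity peels off the top bits of $k$ one at a time, eventually reducing $s_k(v_k)$ to $s_{2^{i_0}}(v_{2^{i_0}})$, which equals $1$ by the base case.

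The main obstacle is confirming that the two auxiliary facts borrowed from the Cantor-basis discussion — the composition identity $s_{a+b} = s_a \circ s_b$ and the two-term form $s_{2^m}(y) = y^{2^{2^m}} + y$ — remain valid for the tower flag. The former is structural and depends only on the $\GF{2}$-linearity of $s_b$ together with $V_b \subset V_{a+b}$, while the latter is immediate once we observe that $V_{2^m}$ is a subfield of order $2^{2^m}$. The genuine computational heart of the argument is thus the single identity $x_{m+1}^{2^{2^m}} = x_{m+1} + 1$, forced by the specific defining relation $x_{m+1}^2 + x_{m+1} + v_{2^m - 1} = 0$ of the tower.
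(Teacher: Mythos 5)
Your reduction strategy (peel off the top bit of $k$, reduce $s_k(v_k)$ to $s_{k-2^{i_j}}(v_{k-2^{i_j}})$, and finish with the power-of-two case via the Frobenius identity $x_{m+1}^{2^{2^m}}=x_{m+1}+1$) is in essence the same induction the paper runs, and your base case is fine. The gap is the parenthetical justification of the composition identity: $s_{a+b}=s_a\circ s_b$ is \emph{not} valid for an arbitrary flag of $\GF{2}$-subspaces, and $\GF{2}$-linearity of $s_b$ together with $V_b\subset V_{a+b}$ is not enough. Those facts only say that $s_b(V_{a+b})$ is \emph{some} $a$-dimensional subspace; for $s_a\circ s_b$ to have root set $V_{a+b}$ one needs the genuine compatibility condition $s_b(V_{a+b})=V_a$. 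A small counterexample: in $\GF{8}=\GF{2}[t]/(t^3+t+1)$ with the flag spanned by $(1,t,t^2)$ one has $s_1(x)=x^2+x$ and $s_1(s_1(x))=x^4+x$, whose roots form $\GF{4}$, not the span of $\{1,t\}$, so $s_1\circ s_1\neq s_2$. For the tower flag the needed property does hold, but in the paper it is obtained (via Prop.~\ref{lemma:SPderivative} and Claim~\ref{lemma:space:equal}) \emph{after}, and partly by means of, the very proposition you are proving, so appealing to the Cantor-basis recursivity here would be circular; as written, your proof asserts the one nontrivial structural fact it needs without proof.

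The good news is that the fix is already contained in your key calculation. The identity $s_{2^{i}}(x_{i+1}v_r)=(x_{i+1}+1)v_r+x_{i+1}v_r=v_r$ holds for \emph{every} $r<2^{i}$, i.e. $s_{2^{i}}(v_{2^{i}+r})=v_r$ for all basis vectors, not just for $v_k$. Hence $s_{2^{i}}$ maps $V_{2^{i}+\ell}$ onto $V_\ell$ (for $\ell\le 2^{i}$) with kernel $V_{2^{i}}$, so $s_\ell\circ s_{2^{i}}$ is a monic polynomial of degree $2^{2^{i}+\ell}$ vanishing on all of $V_{2^{i}+\ell}$ and therefore equals $s_{2^{i}+\ell}$; iterating this gives exactly the decomposition $s_k=s_{2^{i_0}}\circ\cdots\circ s_{2^{i_j}}$ you use. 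With that lemma inserted, your argument is correct and is a mild variant of the paper's: the paper avoids composition altogether by splitting the defining product $\prod_{b\in V_h}(v_h-b)$ into cosets of the subfield $V_q$ ($q$ the top bit), collapsing each coset with $\prod_{c'\in V_q}(X+c')=X^q+X$ and the same core fact $v_q^q+v_q=1$ (proved there by a trace argument rather than your Galois-conjugate argument), and then inducting; your route trades that coset manipulation for the linearized-polynomial composition, at the cost of having to prove the mapping property above.
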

\textbf{Proof:}  See Appendix~\ref{sec:prooft}.\\

\begin{corollary}[recursivity]
With respect to $(v_i)$,\ \(s_{k+1}(x) = s_k(x) \prod_{c\in V_k+v_k} (x-c) = 
  s_k(x)s_k(x+v_k) = s_k(x)(s_k(x) + s_k(v_k))=(s_k(x))^2+s_k(x) \).
\end{corollary}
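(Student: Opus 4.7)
The plan is to establish the chain of four equalities in the corollary one link at a time, invoking only the decomposition $V_{k+1} = V_k \sqcup (V_k + v_k)$, the $\mathbb{F}_2$-linearity of $s_k$ (from the ``linearity'' bullet in the list of properties of $s_i$), and Proposition~\ref{lemma:subPolyEval}. The first equality is purely definitional: since $v_k \notin V_k$ but $v_k \in V_{k+1}$, the coset decomposition gives $V_{k+1} = V_k \cup (V_k + v_k)$ as a disjoint union, so splitting the product in $s_{k+1}(x) = \prod_{c \in V_{k+1}}(x-c)$ according to this partition yields $s_k(x)\prod_{c \in V_k + v_k}(x-c)$.

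For the second equality I would re-index the second product by writing $c = c' + v_k$ with $c' \in V_k$, so that in characteristic $2$ one has $\prod_{c' \in V_k}(x - c' - v_k) = \prod_{c' \in V_k}\bigl((x+v_k) - c'\bigr) = s_k(x + v_k)$. The third equality uses $\mathbb{F}_2$-linearity of $s_k$: because $s_k(x)$ is a $\mathbb{F}_2$-linearized polynomial (only monomials of the form $x^{2^m}$ appear), it behaves additively on sums of elements of the algebraic closure, giving $s_k(x + v_k) = s_k(x) + s_k(v_k)$. Finally, Proposition~\ref{lemma:subPolyEval} gives $s_k(v_k) = 1$, which turns the right-hand factor into $s_k(x) + 1$ and the product into $(s_k(x))^2 + s_k(x)$.

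The only substantive step is the appeal to linearity in the third equality; this is where I would be most careful, since ``linearity'' in the bullet list is stated for $s_i$ as a polynomial shape (only $x^{2^m}$ monomials) rather than as a functional identity. I would briefly note that any polynomial whose monomials all have degree a power of $2$ is additive in characteristic $2$ by the Frobenius, so $s_k(a+b) = s_k(a) + s_k(b)$ for all $a,b$ in any extension, in particular for $a = x$ and $b = v_k$ in the polynomial ring. Everything else is bookkeeping, and I expect the entire proof to fit in a few lines.
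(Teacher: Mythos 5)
Your chain of equalities is exactly the paper's own argument: the corollary is stated \emph{as} that chain, resting on the coset split $V_{k+1} = V_k \sqcup (V_k + v_k)$, the reindexing $c = c' + v_k$, additivity of $s_k$, and Proposition~\ref{lemma:subPolyEval} supplying $s_k(v_k) = 1$. The one step you should tighten is the third equality. You justify $s_k(x+v_k) = s_k(x) + s_k(v_k)$ by appealing to the bullet that $s_i(x)$ contains only monomials of the form $x^{2^m}$; but in the paper that bullet is established for the vanishing polynomials w.r.t.\ the \emph{Cantor} basis, whereas in this corollary $s_k$ is defined w.r.t.\ the tower basis $(v_i)$, and the whole purpose of Claim~\ref{lemma:space:equal} (which this corollary is used to prove) is to show that the two families of vanishing polynomials coincide. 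As written, your justification is therefore circular. The paper sidesteps this by invoking the basis-independent fact that the vanishing polynomial of any $\GF{2}$-subspace is a linearized, hence additive, polynomial \cite[Theorem 1]{lch-afft}. Alternatively, you can close the loop with a joint induction: $s_0(x) = x$ is additive, and if $s_k$ is additive then your chain yields $s_{k+1}(x) = (s_k(x))^2 + s_k(x)$, which is again additive in characteristic two, so the recursion and the linearity needed at the next step are proved simultaneously. With that one repair your proof is complete and identical in substance to the paper's.
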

Note that $s_i(x)$ w.r.t. any basis of field is linear \cite[Theorem 1]{lch-afft}.
Thus, the vanishing polynomials of tower and Cantor bases are the same.
%
%  shows that, w.r.t. any basis of field,
%, i.e., only have terms $x^{2^k}$.
Since the $s_i(x)$ determines the $V_i$ by unique factorization theorem,
we have proved Claim~\ref{lemma:space:equal}.

%we can have 
%The GF in tower representation can also be defined as:
% \begin{definition}
% \label{def:towergf}
% \paragraph{Note:}
% $\TGF{2^{2^\ell}} := V_{2^\ell} = \TGF{2^{2^{\ell-1}}}[x_\ell]/(x_\ell^2+x_\ell+v_{2^\ell-1}) $ .
% \end{definition}

%\textbf{TODO: define ``large'' and ``small'' in GF.}

%\begin{mycons}
%  If $q = 2^{k} > i$, then $x_{k+1} v_i = v_q v_i = v_{2^k + i}$.
%\end{mycons}
%\begin{mycons}
%  If $q = 2^{k} > i$ and $v\in V_i$,  then $x_{k+1} v = v_q v\in V_{2^k + i}$.
%\end{mycons}

%%%%%%%%%%%%%%%%%%%%%%%%%%%%%%%%%%%%%%%%%%%%%%%%%%%%%%%%%%%%%%%%%%%%%%%%%%%%%%%%%%%%%%%

\subsubsection{Subfield Multiplication in Tower Fields}

We show multiplying by a subfield element is not only cheaper than
general multiplication but also calculated as a vector-scalar product in this section.
The cost depends on the size of the subfield. 
For example, to multiply $a \in \TGF{2^{2w}}$ by $b \in \TGF{2^w}$,
the $a$ is represented as a polynomial $a := a_0 + a_1 x \in \TGF{2^w}[x]$
with $a_0,a_1 \in \TGF{2^w}$.
Hence the product of $a \cdot b \in \TGF{2^{2w}}$ is calculated as two multiplications
in $\TGF{2^{w}}$, i.e., $ (a_0  + a_1 x) \cdot b$.
It is easy to generalize to the following proposition.
\begin{proposition}
\label{lemma:towergfmul}
Given $a \in \TGF{2^{l_1}} = V_{{l_1}} , b \in \TGF{{2^{l_2}}} = V_{{l_2}}$, and $l_2| l_1$, 
$a \cdot b \in V_{{l_1}}$ %become a polynomial multiplication in $\TGF{2^{2^{l_2}}}[x]$
%and $b$ is a constant in $\TGF{2^{2^{l_2}}}[x]$.
%
%The polynomial multiplication with constant 
can be performed with $l_1/l_2$ 
field multiplications in $\TGF{2^{{l_2}}}$.
\end{proposition}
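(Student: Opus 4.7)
The plan is to prove the proposition by induction on $\log_2(l_1/l_2)$, using the recursive definition of the tower, namely $\TGF{2^{2k}} = \TGF{2^{k}}[x_m]/(x_m^2+x_m+v_{2k-1})$, exactly generalizing the $l_1=2w$, $l_2=w$ case worked out in the paragraph preceding the statement. Note first that, since $V_{l_1}$ and $V_{l_2}$ are assumed to be fields, both $l_1$ and $l_2$ must be powers of $2$ (as observed right after Definition~\ref{def:si}), so the hypothesis $l_2 \mid l_1$ forces $l_1/l_2$ itself to be a power of $2$, making the binary induction well-defined.

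For the base case $l_1 = l_2$, the product $a \cdot b$ is by definition a single multiplication in $\TGF{2^{l_2}}$, matching $l_1/l_2 = 1$.

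For the inductive step, assume the claim holds for the extension $\TGF{2^{l_2}} \subseteq \TGF{2^{l_1/2}}$. The tower construction gives $\TGF{2^{l_1}} = \TGF{2^{l_1/2}}[x_m]/(x_m^2+x_m+v_{l_1-1})$, so every $a \in \TGF{2^{l_1}}$ decomposes uniquely as $a = a_0 + a_1 x_m$ with $a_0, a_1 \in \TGF{2^{l_1/2}}$. Because $l_2 < l_1$ and $l_2 \mid l_1$ (with both powers of $2$), we have $l_2 \le l_1/2$ and $l_2 \mid l_1/2$, so by transitivity of the nested tower the element $b$ already lies in the subfield $\TGF{2^{l_1/2}}$. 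Consequently
\[
 a \cdot b \;=\; (a_0 \cdot b) + (a_1 \cdot b)\, x_m,
\]
and no reduction modulo $x_m^2+x_m+v_{l_1-1}$ is ever triggered because multiplication by a scalar preserves the degree in $x_m$. By the inductive hypothesis, each of $a_0 \cdot b$ and $a_1 \cdot b$ costs $(l_1/2)/l_2$ multiplications in $\TGF{2^{l_2}}$, giving the desired total of $l_1/l_2$.

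There is really no main obstacle: the statement is essentially bookkeeping for the recursive tower definition. The one point to state carefully is the subfield containment $\TGF{2^{l_2}} \subseteq \TGF{2^{l_1/2}}$, which is immediate from the nested form of the construction and the divisibility $l_2 \mid l_1/2$; once this is in place, the inductive step is the one-step identity already displayed in the text preceding the statement, and no explicit basis or structural lemma beyond the tower definition itself is required.
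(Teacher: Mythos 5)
Your proof is correct and takes essentially the same approach as the paper: the paper only works out the one-step case $l_1=2l_2$ (writing $a=a_0+a_1x$ over the subfield and multiplying componentwise) and declares the general statement an easy generalization, and your induction through the tower, halving $l_1$ at each step and using the containment $\TGF{2^{l_2}}\subseteq\TGF{2^{l_1/2}}$, is precisely that generalization made explicit.
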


%
% So the $s_i(x)$ in tower field is also 
% the same as Cantor's vanishing polynomials.

%%%%%%%%%%%%%%%%%%%%%%%%%%%%%%%%%%%%%%%%%%%%%%%%%%%%%%%%%%%%%%%%%%%%%%%%%%%%%%%%%

\subsubsection{LCH FFT over Tower Fields}

Besides the compatibility with Cantor allowing efficient \lchbtfy and basis conversion
in tower fields, 
we show the multipliers in \lchbtfy fall to smaller numbers
which can be optimized by subfield multiplication with Prop.~\ref{lemma:towergfmul}
in this section.

%Two points of note with multipliers in the \lchbtfy.  One is that
%multiplying with subfield elements are faster in \TGF{2^{128}} and
%\TGF{2^{256}} by Prop.~\ref{lemma:towergfmul}. 
%The other is that there are many of them which needs
%to be read from a table.  Here we show why a tower field construction
%makes the multipliers $s_k(\alpha)$ and $s_k(v_k)$
%(corresponding to $s_k(\beta_k)$ in Alg.~\ref{alg:fft})
%in LCH FFT `` small''.

Recall that the two multipliers are $s_k(\alpha)$ and $s_k(v_k)$,
which corresponds to $s_k(\beta_k)$ in Alg.~\ref{alg:fft}.

First we have $s_k(v_k) = 1 $ by Prop.~\ref{lemma:subPolyEval}.
We can thus avoid the multiplication of $s_k(v_k)$ as in Cantor basis.
One butterfly unit thus contains one multiplication and two additions.

\begin{figure}[!tbh]
  \centering
    \includegraphics[width=0.5\textwidth]{}
    \caption{Details of the butterfly unit.}
    \label{fig:butterfly-detail}
\end{figure}
Figure~\ref{fig:butterfly-detail} shows the details of the butterfly unit.
It is also an example of
evaluating a degree-1 polynomial $f(x) = f_0 + f_1 x$ at points $\{c , c+1\}$, 
which $\alpha = c$ with the notation in Alg.~\ref{alg:fft}.
Since $X_1(x) = x$, the degree-1 polynomial after basis conversion is identical
to the original polynomial, i.e., $g(X) = g_0 + g_1 X_1 = f_0 + f_1 X_1$.
The only effective multiplier is $s_0(\alpha) = s_0(c) = c$.

Now we discuss the multiplier $s_k(\alpha)$.
\begin{proposition}
\label{lemma:SPderivative}
For any $v_i = \overline{2^i} $ w.r.t. tower representation, $ v_i^2 + v_i \in V_i$. 
\end{proposition}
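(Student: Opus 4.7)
My plan is to exploit the linearized structure of the vanishing polynomial $s_i(x)$ together with the identity $s_i(v_i)=1$ supplied by Proposition~\ref{lemma:subPolyEval}. By Definition~\ref{def:si}, $s_i(x)=\prod_{a\in V_i}(x-a)$ has degree $2^i=|V_i|$ and its root set is exactly $V_i$, so to establish $v_i^2+v_i\in V_i$ it is enough to verify the single equation $s_i(v_i^2+v_i)=0$.

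The key structural facts are that $s_i(x)$ is an \GF{2}-linearized polynomial: the ``linearity'' bullet of Section~\ref{sec:gf-ver3} (citing \cite[Theorem~1]{lch-afft}) says $s_i(x)$ contains only monomials of the form $x^{2^j}$, and a straightforward induction on the recursion $s_{k+1}(x)=s_k(x)^2+s_k(x)$ starting from $s_0(x)=x$ shows that all its coefficients lie in \GF{2}. Hence $s_i$ is additive, $s_i(a+b)=s_i(a)+s_i(b)$, and it commutes with Frobenius squaring, $s_i(a^2)=s_i(a)^2$.

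With those two facts in hand, the computation is a single line:
\[
s_i\bigl(v_i^2+v_i\bigr) \;=\; s_i(v_i^2)+s_i(v_i) \;=\; s_i(v_i)^2+s_i(v_i) \;=\; 1^2+1 \;=\; 0,
\]
where the penultimate equality invokes Proposition~\ref{lemma:subPolyEval}. Consequently $v_i^2+v_i$ is a root of $s_i(x)$ and therefore lies in $V_i$, completing the proof.

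I do not foresee a genuine obstacle: the argument reduces to the two good properties of $s_i$ (additivity and commutation with squaring over \GF{2}) plus Proposition~\ref{lemma:subPolyEval}. If a proof fully independent of Proposition~\ref{lemma:subPolyEval} were desired, one could instead induct on $i$, writing $i=2^m+i'$ with $i'<2^m$, expanding $v_i=x_{m+1}\,v_{i'}$, squaring via $x_{m+1}^2 = x_{m+1}+\prod_{j=1}^{m} x_j$, and applying the inductive hypothesis to $v_{i'}^2+v_{i'}$; this is more bookkeeping but follows the same lines.
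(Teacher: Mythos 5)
Your proof is correct, but it takes a genuinely different route from the paper's. The paper proves Proposition~\ref{lemma:SPderivative} by a direct induction on the tower construction: for $i=2^k$ it computes $v_i^2+v_i=x_1x_2\cdots x_k=v_{2^k-1}$ from the defining relation of $x_{k+1}$, and for $i=2^{k-1}+j$ it writes $v_i=x_k v_j$, expands $v_i^2+v_i=(x_k^2+x_k)v_j^2+x_k(v_j^2+v_j)$, and places each term in $V_i$ using the subfield structure and Corollary~\ref{cons:aboutSpace}; notably this argument is self-contained and does not invoke Proposition~\ref{lemma:subPolyEval} (indeed your fallback sketch at the end is essentially the paper's proof). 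You instead characterize $V_i$ as the exact root set of $s_i(x)=\prod_{a\in V_i}(x-a)$ and reduce the claim to $s_i(v_i^2+v_i)=0$, which follows from additivity, commutation with Frobenius, and $s_i(v_i)=1$. The one point that needed care is that commuting with squaring requires the coefficients of $s_i$ to lie in \GF{2}, not merely that $s_i$ is linearized; you correctly supply this via induction on the recursion $s_{k+1}=s_k^2+s_k$, which the paper establishes (with respect to $(v_i)$) as the corollary following Proposition~\ref{lemma:subPolyEval}, so there is no circularity. What each approach buys: the paper's induction is elementary and independent of the $s_i$ machinery, making it usable before any facts about vanishing polynomials are in place, while yours is a one-line computation once Proposition~\ref{lemma:subPolyEval} and the \GF{2}-coefficient property are available, and it makes transparent the conceptual point that $s_1:x\mapsto x^2+x$ maps the span of $v_0,\ldots,v_i$ into $V_i$ --- exactly the property the paper exploits later when bounding the size of the butterfly multipliers.
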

\textbf{Proof:} See Appendix~\ref{sec:prooft}.\\

Contrast this with the Cantor basis, where $\beta_i^2 + \beta_i = \beta_{i-1}$.
Note that $s_1: x \mapsto x^2 + x$ is a linear map
from vector spaces $V_{i+1}$ to $V_i$ and its kernel is $\{0,1\}$.
Since each vector in $V_i$ is exactly the image 
of two vectors in $V_{i+1}$,
each vector in $V_i\backslash V_{i-1}$ is also the image
of exactly two vectors in $V_{i+1}\backslash V_i$.
Hence $s_1(v_i) = v_{i-1} + u$ where $u \in V_{i-1}$, i.e.,
$s_1$ shortens the tower representation of $x$ by exactly one bit.
Since %$s_i= \overbrace{s_1\circ s_1 \circ \cdots \circ s_1}^{i \text{ times}}$,
$s_k$ is just applying $s_1$ consecutively $k$ times,
% From recursivity of $s_i(v_i) = s_{i-1}( s_1(v_i) )$, we know the effect of higher degree
% $s_i(x)$ removes more MSB of $v_i$ and makes result even smaller.
% The linearity of $s_i(x)$ allows us to consider every input bit of $s_i(x)$ separately.
the multiplier $s_k(\alpha)$ in LCH butterflies is (a)
$k$ bits shorter than $\alpha$ in the tower representation,
and (b) independent of the least significant $k$ bits of $\alpha$.

%In summary: if we evaluate a polynomial at points $V_m$ with initial $\beta = 0$
%in tower representation, Prop.~\ref{lemma:SPderivative} and linearity of $s_i(x)$
%result in \emph{smaller} multipliers in \lchbtfy.
%
% \paragraph{Multipliers in LCH Butterflies in tower representation.}
% Here is our method to control the constants in LCH butterflies.
% \begin{enumerate}
% \item For multiplying \emph{small} constants in butterflies,
% we evaluated the polynomials with \emph{smaller} elements 
% which is exactly smallest $V_i$ and initial $\beta = 0$. 
% \item The Prop.~\ref{lemma:SPderivative} and linearity of $s_i(x)$ automatically
% and constructing GF in Cantor basis 
% result in \emph{smaller} constants as long as performing butterflies in GF of tower representations. 
% \end{enumerate}
Fig.~\ref{fig:example:butterfly:polymul} depicts the evaluation 
of a degree-$7$ polynomial
at $16$ points $\{ 0,1,\ldots,\mathtt{0xf} \}$ 
with LCH FFT, using $4$ ($= \log 16$) layers of butterflies.
We can observe that  
the multipliers in butterflies are smaller than the actual evaluation points.

\begin{figure}[!htb]
\centering
    \includegraphics[width=0.6 \textwidth]{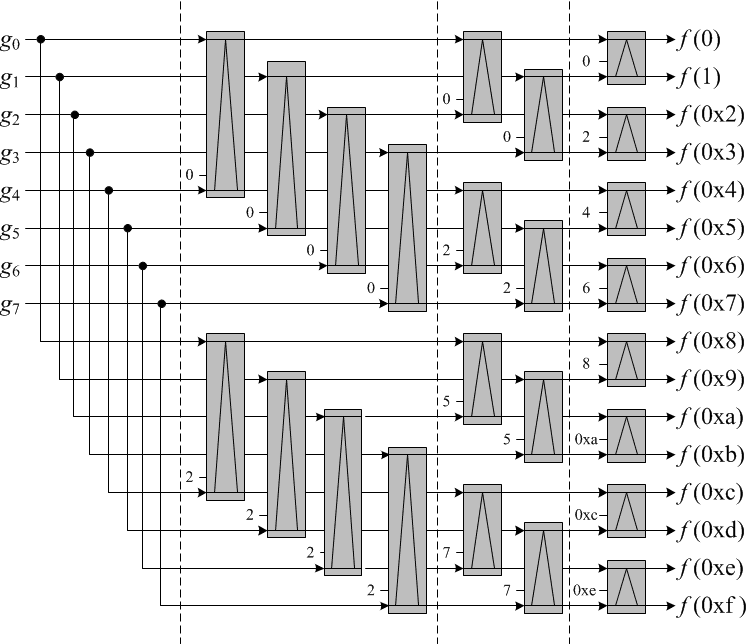}
    \caption{The forward butterfly units for evaluating
    a degree-$7$ polynomial $ f(x) = g(X) = g_0 + \cdots + g_7 X_7$
    at $16$ points $\{ 0,1,\ldots,\mathtt{0xf} \}$. }
    \label{fig:example:butterfly:polymul}
\end{figure}

Since coefficients above degree-$7$ are all $0$,
the first layer %which evaluates $s_3(x)$ 
is simply  a ``fan-out'' 
(in software,  copying a block of memory).
The multipliers in second layer are calculated by evaluating the degree-$4$ $s_2(x)$ at two 
$\alpha \in \{ 0, \mathtt{0x8} \in \GF{16} \} $, resulting in the
 small multipliers $\{ 0 , \mathtt{0x2} \in \GF{4} \} $.
The third layer evaluates the degree-2 $s_1(x)$ at 4 points 
$\alpha \in \{0,\mathtt{0x4},\mathtt{0x8},\mathtt{0xc}\}$
and results in the multipliers $\{0,\mathtt{0x2},\mathtt{0x5},\mathtt{0x7}\}$.
In the last layer, the multipliers are the subfield elements
$ \{ 0, \mathtt{0x2} , \mathtt{0x4} , \mathtt{0x6} , \ldots\mathtt{0xe} \}$
themselves because $s_0(\alpha) = \alpha$.

\subsection{Faster Multiplication using FFTs in \TGF{2^{128}} or \TGF{2^{256}}}
\label{sec:tower:mul}

The main idea of our algorithm is to trade an
expensive field isomorphism for the faster 
multiplication by small subfield elements.

We show the algorithm for multiplying binary polynomials in Alg.~\ref{alg:bitpolymul}.
Assume we use a $w=64$-bit word first.
The input polynomials are in $64$-bit blocks after partitioning(\texttt{Split()}).
We perform the basis conversion before changing representations
so as to have more densely packed data during the conversion. 
Then we convert the data to the tower representation.
The data are kept in the tower representation for performing \lchbtfy.
We efficiently compute the subfield multiplication
% between input data and small subfield elements 
with the technique
in Sec.~\ref{sec:gf-tower-mult}.  The \pointmul are also
performed in $\TGF{2^{128}}$; the somewhat different operations are
detailed in Sec.~\ref{sec:gf-general}.  The inverse
butterfly and basis conversion stages are then computed (still in the
tower representation) before converting back to the polynomial
basis of \GF{2^{128}}.  Then we split the $128$-bit results into $64$-bit blocks
and collate coefficients for the final result (\texttt{InterleavedCombine()}).

An initial block width of $w=128$ bits is also possible and in this
case the operative field is \TGF{2^{256}}.
\begin{algorithm}[!tbh]
\SetKwFunction{bpolymul}{\ensuremath{\texttt{binPolyMul}}}
\SetKwFunction{split}{\ensuremath{\texttt{Split}}}
\SetKwFunction{combine}{\ensuremath{\texttt{InterleavedCombine}}}
\SetKwFunction{changerep}{\ensuremath{\texttt{changeRepr}}}
\SetKwFunction{bc}{\ensuremath{\texttt{BasisCvt}}}
\SetKwFunction{nfft}{\ensuremath{\texttt{FFT}_{\texttt{LCH}}}}
\SetKwFunction{ibc}{\ensuremath{\texttt{iBasisCvt}}}
\SetKwFunction{infft}{\ensuremath{\texttt{iFFT}_{\texttt{LCH}}}}
\SetKwInOut{Input}{input}
\SetKwInOut{Output}{output}
\bpolymul{ $a(x),b(x)$ } : \\
\Input{ $a(x) , b(x) \in \GF{2}[x] $ .}
\Output{ $ c(x) = a(x)\cdot b(x) \in \GF{2}[x]$ .}
\BlankLine
%// Partition the polynomials to 64-bits blocks: \\
$f_a(x) \in \GF{2^{w}}[x] \gets \split( a(x) )$ . \\
$f_b(x) \in \GF{2^{w}}[x] \gets \split( b(x) )$ . \\
%// BC in $\GF{2^{64}}[x]$.
$g_a(X) \in \GF{2^{w}}[X] \gets \bc( f_a(x) ) $ . \\
$g_b(X) \in \GF{2^{w}}[X] \gets \bc( f_b(x) ) $ . \\
%// Evaluating $2n$ points for $n$ terms polynomials. \\
$ \widetilde{g_a}(X) \in \TGF{2^{2w}}[X] \gets \changerep( g_a(X) ) $ . \\
$ \widetilde{g_b}(X) \in \TGF{2^{2w}}[X] \gets \changerep( g_b(X) ) $ . \\
$[ \widetilde{f_a}(0),\ldots,\widetilde{f_a}( \overline{2n-1} )] \in \TGF{2^{2w}}^{2n} \gets \nfft( \widetilde{g_a}(X) , 0 ) $. \\
$[ \widetilde{f_b}(0),\ldots,\widetilde{f_b}( \overline{2n-1} )] \in \TGF{2^{2w}}^{2n} \gets \nfft( \widetilde{g_b}(X) , 0 ) $. \\
$ [ \widetilde{f_c}(0),\ldots,\widetilde{f_c}(\overline{2n-1}) ] \in \TGF{2^{2w}}^{2n} \gets 
  [\widetilde{f_a}(0) \cdot \widetilde{f_b}(0),\ldots,\widetilde{f_a}(\overline{2n-1})\cdot \widetilde{f_b}(\overline{2n-1}) ] $ \\
$ \widetilde{g_c}(X) \in \TGF{2^{2w}}[X] \gets \infft( [ \widetilde{f_c}(0), \ldots , \widetilde{f_c}( \overline{2n-1} )] ) $ . \\
$ \widetilde{f_c}(x) \in \TGF{2^{2w}}[x] \gets \ibc( \widetilde{g_c}(X) ) $ . \\
$ f_c(x) \in \GF{2^{2w}}[x] \gets \changerep( \widetilde{f_c}(x) ) $ . \\
$ c(x) \in \GF{2}[x] \gets \combine( f_c(x) ) $ . \\
return $ c(x) $.
\caption{Multiplications for binary polynomials.}
\label{alg:bitpolymul}
\end{algorithm}

%%% Local Variables:
%%% mode: latex
%%% eval: (TeX-PDF-mode 1)
%%% eval: (TeX-source-correlate-mode 1)
%%% TeX-master: "bitpolymul.tex"
%%% End:

%\input{sec_implementation.tex}

\section{Implementation}
\label{sec:imple}

%We discuss the details of our implementation in this section.

\subsection{$\TGF{2^{128}}$ Multiplication by subfield elements}
\label{sec:gf-tower-mult}
The operation for multiplying an element of tower fields by an subfield element is
implemented as a scalar multiplication of a vector over various subfields by a scalar
with Prop.~\ref{lemma:towergfmul}.  We show how to calculate the product
efficiently in current mainstream computers in this section.

\subsubsection{Scalar Multiplication with Vector Instruction Set}
The typical single-instruction-multiple-data (SIMD) instruction set
nowadays is Advanced Vector Extensions 2 (AVX2), providing 256-bit \texttt{ymm} registers
on x86 platforms.
We use the table-lookup instruction \vpshufb in AVX2 for multiplying
elements in $\TGF{2^{128}}$ by subfield elements
which are $\GF{16}$ to $\TGF{2^{32}}$.
We demonstrate the scalar multiplication over subfields with \pshufb instruction
which is the precursor to \vpshufb and uses 128-bit \texttt{xmm} registers.

\pshufb takes two 16-byte sources which one is a lookup table of 16
bytes $\bm{x} = (x_0,\, x_1,\allowbreak \ldots,\allowbreak \, x_{15})$
and the other is 16 indices
$\bm{y} = (y_0,\, y_1,\allowbreak \ldots,\allowbreak \, y_{15})$.  The 16-byte
result of ``\pshufb $\bm{x},\bm{y}$'' at position $i$ is $x_{y_i\!\!\!\mod 16}$ if
$y_i\ge 0$ and $0$ if $y_i<0$.  
\vpshufb simply performs two copies of
\pshufb in one instruction. The two instructions are suitable
for scalar multiplication over small fields \cite{CYC2013Raidq}.

For scalar multiplication over \GF{16},
we first prepare 16 tables; each table stores the product of all elements
and a specific element in $\GF{16}$.
Suppose we have $ \bm{a} \in \GF{16}^{32}$ and $ b \in \GF{16}$, 
we can apply \vpshufb to the prepared ``multiply-by-$b$'' table 
and $\bm{a}$ for the product of $\bm{a}\cdot b$.
Since we use one 256-bit register to store 64 elements in \GF{16},
the data in $\bm{a}$ have to be
split into nibbles (4-bit chunks) before applying \vpshufb.

The scalar multiplication over \GF{256} is similar to \GF{16}
except that the number of prepared tables becomes 256 and one
multiplication costs 2 \vpshufb.

For scalar multiplication over \TGF{2^{16}} or \TGF{2^{32}}, we
implement the field multiplication as polynomial multiplication in $\GF{256}[x]$
because $2^{16}$ or $2^{32}$ prepared tables is too much for caches.
The Karatsuba's method is applied to 
reduce the total number of multiplications in \GF{256}
while multiplying polynomials.

\subsubsection{Transpose the Data Layout for Higher Parallelism}
While multiplying $\TGF{2^{128}}$ elements by elements in 
$\TGF{2^{16}}$ or $\TGF{2^{32}}$, the bytes in a $\TGF{2^{128}}$
element might multiply by different multipliers in $\GF{256}$.
Since we use scalar multiplication over \GF{256} as our building blocks,
multiplying by different multipliers reduces the efficiency.

\paragraph{Example: Product of $\TGF{2^{128}}$ and $\TGF{2^{16}}$ elements}
The natural data layout of $\TGF{2^{128}}$ consists of eight
consecutive $\TGF{2^{16}}$ elements, each with its two bytes stored side
by side. 
% To use \vpshufb, we would have to mask off half of the
%contents.  More precisely, 
Suppose we are multiplying
$\bm{a} = (a_0,\ldots,a_{15}) \in \TGF{2^{128}} := \GF{256}^{16}$ by
$c = (c_0,c_1) \in \TGF{2^{16}} := \GF{256}^2$.
The $\bm{a}$ is naturally stored as
$( (a_0,a_1),(a_2,a_3), \ldots ,\allowbreak (a_{14},a_{15}) ) \in
\TGF{2^{16}}^{8}$.  
To perform the field multiplication in 
$\TGF{2^{16}} := \GF{256}[x]$ with Karatsuba, we then must compute
$ ((a_0,0),\allowbreak (a_2,0),\ldots,\allowbreak(a_{14},0)) \cdot c_0
$ and $ ((0,a_1),\ldots,\allowbreak(0,a_{15})) \cdot c_1 $. In this case,
one has to mask off half the components in $\bm{a}$ and thus
reduces the efficiency.

\paragraph{``Byte-Slicing'' layout:}
To perform many scalar multiplications withs \vpshufb efficiently,
a possible solution is to store each of the
16 bytes in an \TGF{2^{128}} element in a separate register.
%(and each register store only 1/16  of an \TGF{2^{128}} element.
This  rearrangement of data layout 
is exactly equivalent to a $16 \times 16$ transposition of a byte matrix.
After each byte in the same position of $\TGF{2^{128}}$ elements is
collected in the same register (cf.~Fig.~\ref{fig:layout2}), 
% we can manipulate every single byte
% without affecting other bytes in $\TGF{2^{128}}$ with SIMD
SIMD instructions can cover entire registers.

\paragraph{Transposition as needed:}
%It is important to note that 
While multiplying two elements by a subfield element instead of
one multiplicand in previous example, e.g.,
$ [\bm{a},\bm{b} \in \TGF{2^{128}}]$ multiply by $c \in \TGF{2^{16}}$,
we only have to split the even and odd bytes in $\bm{a}$ and $\bm{b}$
for performing scalar multiplication efficiently.
Hence, we can just use a $2\times2$ transposition, 
which converts $\bm{a}=(a_0,\ldots,a_{15}) \mbox{ and } \bm{b}=(b_0,\ldots,b_{15})$
to $(a_0,b_0,a_2,b_2,\ldots,b_{14}) \mbox{ and } (a_1,b_1,a_3,\ldots,b_{15})$,
to split the even and odd bytes.
The $2 \times 2$ transposition involves only 2 registers 
instead of 16 registers in a $16\times 16$ transposition
and thus is more efficient. 
For multiplying by an $\TGF{2^{32}}$ element,
we need to $4 \times 4$ transpose our data. 
The following Fig.~\ref{fig:layout2} depicts the process of
data rearrangement on different demand.
\begin{figure}[h]
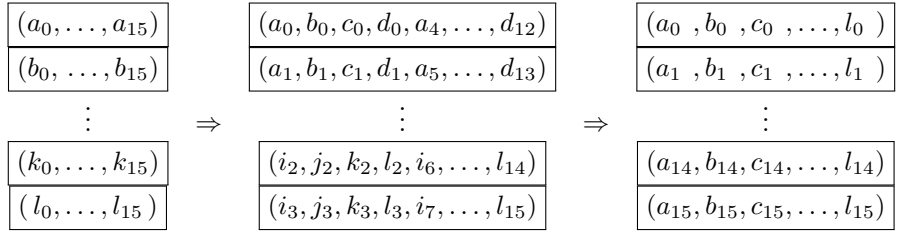

\centering
% \begin{tabular}{c}
% $
% \boxed{  (a_0,\ldots,a_{15}) } , \boxed{ (b_0,\ldots,b_{15}) } , \ldots , \boxed{ (z_0,\ldots,z_{15}) }
% $
%  \\ 
%  \\ \hline
% %\caption{The nature data layout of consecutive elements in $\TGF{2^{128}}$.}
% %\label{fig:layout1}
% %\end{figure}
% %\begin{figure}[h]
% %\centering
% %\begin{align*}
%  \\
% $
% \begin{aligned}
% \boxed{ ( a_{0\,\,\,} , b_{0\,\,\,} , c_{0\,\,\,}, \ldots , z_{0\,\,\,} ) } \\
% \boxed{ ( a_{1\,\,\,} , b_{1\,\,\,} , c_{1\,\,\,}, \ldots , z_{1\,\,\,} ) } \\
%   \vdots \, \, \qquad \qquad \quad \\
% \boxed{ ( a_{15} , b_{15} , c_{15}, \ldots , z_{15} ) } \\
% \end{aligned}
% $
% %\end{align*}
% \end{tabular}
\[
\begin{array}{ccc c c}
\boxed{ (a_0,\ldots,a_{15}) } & & 
\boxed{ (a_0,b_0,c_0,d_0,a_4,\ldots,d_{12}) } & & 
\boxed{ ( a_{0\,\,\,} , b_{0\,\,\,} , c_{0\,\,\,}, \ldots , l_{0\,\,\,} ) } \\ 
\boxed{ (b_0,\,\ldots,b_{15}) } & & 
\boxed{ (a_1,b_1,c_1,d_1,a_5, \dots , d_{13}) }  & &
\boxed{ ( a_{1\,\,\,} , b_{1\,\,\,} , c_{1\,\,\,}, \ldots , l_{1\,\,\,} ) } \\
\vdots &  \Rightarrow & \vdots & \Rightarrow & \vdots \\
\boxed{ (k_0,\ldots,k_{15}) } & & 
\boxed{ (i_2,j_2,k_2,l_2,i_6,\ldots,l_{14}) } & &
\boxed{ ( a_{14} , b_{14} , c_{14}, \ldots , l_{14} ) } \\
\boxed{ (\,l_0,\ldots,l_{15}\,) } & & 
\boxed{ (i_3,j_3,k_3,l_3,i_7,\ldots,l_{15}) } & &
\boxed{ ( a_{15} , b_{15} , c_{15}, \ldots , l_{15} ) } \\
\end{array}
\]
\caption{``Byte-matrix transpose'' for $\TGF{2^{128}}$ elements:
  %into a layout suited to SIMD. 
  The data layout in the middle is 
  for multiplying by  $\TGF{2^{32}}$ elements, 
   and the rightmost layout is for \pointmul.}
\label{fig:layout2}
\end{figure}

To transpose a matrix, we use similar techniques in \cite{hackersdelight},
i.e., transposing  $16\times16$ can be done after transposing $4\times4$. 
While performing the transposition, we first collect elements in one register
with a byte shuffle instruction (\vpshufb).
The interchange of data between different registers is accomplished by
swizzle instructions in AVX2 instruction set. %, e.g., \texttt{vunpack} or \texttt{vperm}.
We refer readers to 
\cite{intel-isa}\cite{fog:instructions:2017} for more information
on AVX instructions.

% We found that the transposition should not be implemented with
% shuffle instructions, but rather using direct loads and stores, due to
% the excellent scatter-gather capability in Intel CPUs.

\subsubsection{Field Multiplication in \pointmul}
\label{sec:gf-general}
In \pointmul, we need a generic field multiplication
instead of subfield multiplication in \lchbtfy
for data in tower representation.
%If the data is in \GF{2^{128}},
%the process of multiplication is the same as Sec.~\ref{sec:gf-ver1}.
%We prepare a specific process of multiplication for the data in tower field representation.
%
Assuming the data in $\TGF{2^{128}}$ and in the byte-slice layout,
we need the pointwise field multiplication in \pointmul, 
in contrast to scalar multiplication in \lchbtfy.
Since the byte-slice layout,
the multiplication in $\TGF{2^{128}}$ can be easily performed as
polynomial multiplication in $\TGF{2^{64}}[x]$ in the SIMD manner.
In other words, we reduce one parallelized multiplication 
in $\TGF{2^{128}}$ to several parallelized multiplications in $\TGF{2^{128}}$. 
The process is recursively applied until the pointwise
multiplication in $\GF{16}$.
We then apply the method in \cite{cryptoeprint:2017:636}, which
use the logarithm/exponential tables for SIMD field arithmetic,
for the pointwise multiplication in $\GF{16}$.

\subsection{Revisiting the Choice of Base Field}

We discuss the choice of working field based on
the cost of field multiplication.
While multiplying $d$ bits polynomials, we split the data
into $w$-bit wide chunks and then apply FFT with $l=2w$-bit base field.
Here we want to decide on a suitable size of base field.
In the typical x86 CPU, one may choose $l$ to be $64$,$128$, or $256$.

\paragraph{Irreducible polynomial constructed field:}
In this construction, the field multiplication is implemented with \pclmul instruction.
Using Karatsuba's method and linear folding for the reduction,
one multiplication costs $(1+2)=3$, $(3+2)=5$, and $(9+4)=13$ \pclmul 
in $\GF{2^{64}}$, $\GF{2^{128}}$ and $\GF{2^{256}}$ respectively. 
The number of butterflies in \lchbtfy are $\frac{d}{w}\log \frac{2d}{w}$.
With one field multiplication for each butterfly,
we conclude that working on $\GF{2^{128}}$ has lowest number of \pclmul for \lchbtfy.

\paragraph{Tower field:}

For $l$-bit tower field, the costs of multiplying by subfield
elements in \lchbtfy or generic elements in \pointmul is proportional to length of 
field $l$ as describing in Prop.~\ref{lemma:towergfmul}.
Since number of
multiplications in \lchbtfy proportions to $\frac{d}{l} \log \frac{d}{l}$, 
to enlarge $l$ by two results in
one layer less ($\propto \log \frac{d}{l}$) in \lchbtfy
but the same cost for each layer of butterflies($\propto l \cdot \frac{d}{l}$).

The cost of generic multiplication in \pointmul is $\propto l^{1.7}$
using recursive Karatsuba in Sec.~\ref{sec:gf-general}.
If we compare $l=l_1$ with $l=2l_1$, the smaller field is more efficient.
%How small can the field be? It should fit the underlying
However, the length of field should fit the underlying machine
architecture for better efficiency. 
In the Intel Haswell architecture with AVX2 instruction
set, it is the most efficient to use 256 bits as a unit
because memory access is the best with 256-bit alignment.
Therefore, we choose $l=256$ bits for Intel Haswell in our implementation.

%In general, we want the based GF larger since the complexity of 
%sub-field multiplications in the butterflies increase linearly.
%%However, the GF MUL is heavy in \pointmul.
%We have $O(n\log n)$ light GF MUL which estimate as $O(n)$ in butterfly.
%There are $n$ heavy GF MUL estimated as $O(n^{1.7 ??})$ in \pointmul.
%Is $\TGF{2^{256}}$ the optimal choice ?

\subsection{Field Isomorphism}
\label{sec:cal-isomorphism}
%We require a method for changing field representations.
The change of field representations is simply a matrix product for a pre-defined
matrix $\mathbf{I}$ with the data $\alpha \in V_k$ as a vector.
We compute the product $\mathbf{I} \cdot \alpha $ with
the famous method of four Russians(M4R)\cite{Aho:1974:DAC:578775}.

With M4R of $l$-bit, one first prepars all possible products of
$\mathbf{I}$ and $l$-bit vectors, i.e., prepares $\mathbf{I}\cdot b$ for all $b \in V_l$
, for all $b \in \mathrm{span}(v_l,\ldots,v_{2l-1})$, \ldots ,
for all $b \in \mathrm{span}(v_{k-l},\ldots,v_{k-1})$.
To compute $\mathbf{I}\cdot \alpha$, one splits $\alpha$ to $l$-bit chunks,
looks up the prepared tables for various segments of $\alpha$, and combines the results.
The number of operations proportions to bit-length of $\alpha$.

The choice of $l$ depends on the size of cache for efficiency.
The size of $\mathbf{I}$ for $128$-bit field is $128 \times 128$ bits, and
L1 cache is $32$ KiB for data in the Intel Haswell architecture.
Therefore, we choose the M4R with $4$-bit, which results in
$ 32 \text{KiB} = 16 \times 128 \times 128$ bits prepared tables.

\subsection{Calculation of Multipliers in the Butterflies}
\label{sec:calc-ska}

Suppose we want to calculate $s_i(\alpha)$ for $\alpha \in V_k$ in a tower field. 
$s_i(\alpha)$ can be calculated recursively via $s_i(\alpha) = s_{i-1}(s_1(\alpha))$. 
Since $s_1$ is a linear map, 
we can prepare a table $ \mathbf{S1} := 
[ s_1(v_1) , \ldots , s_1(v_k) ] $ for the images of $s_1$ on all basis elements $(v_1,\ldots,v_k)$.
Since $\alpha = \sum b_i \cdot v_i $ with $b_i \in \{ 0,1\}$,
we have $s_1(\alpha) = \mathbf{S1} \cdot \alpha$ which is a matrix production.
For further optimization, we can omit the least $i$ bits
of $\alpha$ While evaluating $s_i(x)$ by Prop.~\ref{lemma:SPderivative}.

In our implementations, we actually precomputed 31 tables 
$\mathbf{S1},\mathbf{S2},\ldots,\mathbf{S31}$ for the 
evaluations of $s_1(x),s_2(x),\ldots,s_{31}(x)$ at up to $2^{32}$
points and avoid the need for recursion.
Again, we use M4R of $8$-bit to accelerate matrix-vector products. 
Querying a result of $32$-bit $\alpha$ costs $4$ table lookups and
the storage for tables is 4 KiB($4 \times 2^8 \times 32$ bits),
which fits our L1 cache.

\section{Results and Discussion}
\label{sec:results}
We benchmark our implementation\footnote{
Software can be downloaded from
\url{https://github.com/fast-crypto-lab/bitpolymul} .}
on the Intel Haswell architecture (same as \cite{DBLP:conf/issac/HarveyHL16}).
Our hardware is Intel Xeon E3-1245 v3 @3.40GHz with turbo boost disabled
and 32 GB DDR3@1600MHz memory.
The experiments was run in ubuntu 1604, Linux version 4.4.0-78-generic and
the compiler is gcc: 5.4.0 20160609 (Ubuntu 5.4.0-6ubuntu1~16.04.4).

We show our main results in Tab.~\ref{tab:bitpolymul},
comparing against \cite{gf2x} and \cite{DBLP:conf/issac/HarveyHL16}.
Three of our implementations are denoted by their base fields in \lchbtfy.
The version of $\GF{2^{128}}$ is the simple version with evaluation points in Cantor basis
in Sec.~\ref{sec:simple:mul}.
The versions of $\TGF{2^{128}}$ and $\TGF{2^{256}}$ are tower field implementations in Sec.~\ref{sec:tower:mul}.
In general, our implementations are around $10\%$ to $40\%$ faster than the other 
binary polynomial multipliers.

\begin{table}[h] %\renewcommand{\arraystretch}{0.7}
\begin{center}
%\small
\begin{threeparttable}
\caption{Products in degree $< d$ in $\GF{2}[x]$ on Intel Xeon E3-1245 v3 @ 3.40GHz ($10^{-3}$ sec.)
%with AVX2 instruction set, in milliseconds.
}
\label{tab:bitpolymul}
\begin{tabular}{|l | r | r | r| r | r | r | r | r | r | }
\hline
   $\log_2 d/64$ & 15 & 16  & 17 &  18 & 19 & 20 & 21 & 22 & 23   \\  \hline\hline
This work, $\TGF{2^{256}}$               &  9 & 19 & 40 &  90 & 212 & 463 & 982 & 2050 & 4299  \\  \hline 
This work, $\TGF{2^{128}}$          & 11 & 22 & 48 & 104 & 243 & 527 & 1105 & 2302 & 4812  \\  \hline 
This work, $\GF{2^{128}}$      & 12 & 26 & 55 & 119 & 261 & 554 & 1181 & 2491 & 5282 \\  \hline \hline
\texttt{gf2x} \cite{gf2x} \tnote{a}  & 12 & 26 & 59 & 123 & 285 & 586 & 1371 & 3653 & 7364  \\ \hline 
$\GF{2^{60}}$ \cite{DBLP:conf/issac/HarveyHL16} \tnote{b} & 14 &  29 &  64  &  148   & 279   & 668   &  1160  & 3142 & 7040 \\  \hline 
\hline
\end{tabular}
\begin{tablenotes}
  \small
%  \item [a] Evaluation points in Cantor basis.
  \item [a] Version 1.2.  Available from \url{http://gf2x.gforge.inria.fr/}
  \item [b] SVN r10663. Available from \url{svn://scm.gforge.inria.fr/svn/mmx}
\end{tablenotes}
\end{threeparttable}
\end{center}
\end{table}

The first notable result is that our simple version over
$\GF{2^{128}}$ outperforms the previous implementations in \cite{gf2x} and
\cite{DBLP:conf/issac/HarveyHL16} for polynomials over $2^{17+6}$
bits.  The result shows that additive FFT are better based FFT for multiplying
binary polynomials given the same  multiplication in the
base field using \pclmul.

%
%
% \begin{table}[h] %\renewcommand{\arraystretch}{0.7}
% \begin{center}
%\small
% \caption{Different 128-bit Products (Intel Xeon E3-1245 v3 @ 3.40GHz, in ms)}
% \label{tab:bitpolymul-128}
% \begin{tabular}{|l | r | r| r | r | r | r | r | r | }
% \hline
%    $\log_2 d/64$, $d=$(\#terms)& 16  & 17 &  18 & 19 & 20 & 21 & 22 & 23   \\  \hline\hline
% This work, $\TGF{2^{128}}$               & 22 & 48 & 104 & 243 & 527 & 1105 & 2302 & 4812  \\  \hline 
% This work, $\GF{2^{128}}$ (Cantor)       & 26 & 55 & 119 & 261 & 554 & 1181 & 2491 & 5282 \\  \hline 
% This work, $\GF{2^{128}}$ (tower field)  & 27 & 56 & 120 & 264 & 564 & 1204 & 2533 & 5358 \\  \hline 
% \end{tabular}
% \end{center}
% \end{table}
%
%If we simply want to compare the effectiveness of different modes of
%multiplications, we should compare same size base fields as in
%Table~\ref{tab:bitpolymul-128}.  
We can also compare the effectiveness of field multiplication
from second and third rows.
The comparison between $\TGF{2^{128}}$ and
$\GF{2^{128}}$ shows that the subfield multiplication with \vpshufb
outperforms generic multiplication with \pclmul.
This is a counter-intuitive result  since
\pclmul is dedicated to multiply binary polynomials by design.
The effect can be shown qualitatively: we can do $32$
multiplications in \GF{256} using 2 \vpshufb's.  Multiplying by
\TGF{2^{32}} elements (the largest multipliers in the butterflies) costs 9
\GF{256} multiplications.  Multiplying an $\TGF{2^{128}}$ element by
an \TGF{2^{32}} element is 4 \TGF{2^{32}} multiplications, so each
\TGF{2^{32}} to \TGF{2^{128}} product takes $9/4$ \vpshufb's on
average.  Similarly, a \TGF{2^{16}} to \TGF{2^{128}} product takes $3/4$
\vpshufb's.  The average is less than 2 \vpshufb's compared to $5$
\pclmul's when using \GF{2^{128}}.

%The implementation over $\TGF{2^{256}}$ run with best efficiency as expected.

\paragraph*{The profiles of various components}
The ratio of relative cost in our fast subfield multiplication 
between basis conversion $:$ butterfly process $:$\ \pointmul$:$
change of representations for multiplying $2^{20+6}$-bit polynomials are
 1: 3.06: 0.08: 0, 1: 2.11: 0.27: 0.54, and 0.73: 1.74: 0.47: 0.57
for $\GF{2^{128}}$, $\TGF{2^{128}}$ and $\TGF{2^{256}}$, respectively.

The results show that the change of tower representation increases the
cost in \pointmul besides the cost itself. However, the efficient subfield
multiplication reduces the cost of butterfly process and the effects
are greater than the cost increased. The version of $\TGF{2^{256}}$ can
even reduce the cost of basis conversion because the better aligned
memory access fits into machine architecture.

More results on newer Intel architecture and profiles can be found in Appendix~\ref{sec:profile:rawdata}.

\paragraph*{Discussion about the most recent result}
In \cite{vanderhoeven:hal-01579863}, Hoeven et al. present a new result
of multiplying binary polynomials.
They use traditional multiplication FFT and
 replace the Kronecker segmentation in Sec.~\ref{sec:mul-seg-fft} with a Frebenius DFT
for saving about $50\%$ of running time of $\GF{2^{60}}$\cite{DBLP:conf/issac/HarveyHL16}.
We note the same technique can be applied to additive FFT as well as about $50\%$ saving of cost.
The method for additive FFT is described in Appendix~\ref{sec:frobenius:bijection}.

\subsection{Truncated FFT with Non-Power-of-Two Terms}

The experiments are actually the optimal cases for multiplying polynomials
with power-of-two terms by additive-FFT-based multipliers.

Since the cost of additive-FFT over number of terms of polynomial is
highly stairwise, one can truncate the FFT to make the curve somewhat
smoother as in \cite{gf2x}.  One truncated version of additive FFT for $n=3\cdot 2^k$ was
shown in \cite{cryptoeprint:2017:636}.

% Suppose the original additive FFT evaluating a polynomial $f(x)$ at $2^m$ points.
% In this section, we show the method of evaluating $\frac{5}{8} \cdot 2^m $ and $\frac{3}{4} \cdot 2^m$ points
% and the running time of these cases are roughly $5/8$ and $3/4$ of original additive FFT
% in our implementation.

\subsection{Further Discuss on Other Possible Implementations}

In this section, we discuss two possible variants of implementations.
\paragraph*{Tower field implementations with evaluation points in Cantor basis}
If we choose evaluation points in Cantor basis, the evaluation 
of $s_i(\alpha)$ seems faster than the M4R technique in previous section.
However, we still need to change the representation of $s_i(\alpha)$ from
Cantor basis to tower field in this case. This operation results in the same cost
as the calculation of $s_i(\alpha)$ in tower field with M4R.

\paragraph*{Performing subfield multiplication in Cantor basis}
For subfield multiplications in tower field, the efficiency comes from the powerful
\vpshufb instruction.
Since we don't have Prop.~\ref{lemma:towergfmul} in Cantor basis, we can not use
\vpshufb in Cantor basis in the same way of tower fields.
Another options for implementing field multiplication in Cantor basis is to 
use bit-sliced data and logical instructions \cite{auth256}.
There are at least 64 \texttt{ymm} registers for operating in 64-bit base field,
resulting inefficiency from too much data in play.

\section{Concluding Remarks}
\label{sec:summary}

%\subsection{future works}
We have presented our efficient multipliers based on recently developed
additive FFTs, which has similar but lower multiplicative complexity
as the ternary variant of Sch\"{o}nhage's algorithm used in
\cite{gf2x}. 

% Note that the main portion of our method compared to that of the
% ternary Sch\"onhage is like a binary FFT with DIT (decimation in time)
% compared to a ternary one.  Everything else being equal, we expect the
% binary FFT to hold the advantage.
% 
% Because our method is a direct improvement over Bernstein-Chou, which
% in turn essentially achieved the same things that Cantor does with the
% same complexity $O(n \log^{\log_2 3} n)=O(n \log^{1.5849\ldots} n)$,
% it is easier to say that our method should work better than Cantor's
% algorithm.

In \cite{gf2x}, Brent \emph{et al.} also implemented the
Cantor~\cite{Cantor:1989} algorithm beside Sch\"onhage.
They
concluded that \emph{``Sch\"onhage's algorithm is consistently faster by
  a factor of about 2 (than Cantor)''}. 
In their implementation, multiplicative FFT outperformed additive FFT.

Our experiments show that recently developed additive-FFT does help
for multiplying binary polynomials of large degrees in practice.  
We derive a further
advantage by exploiting the lower cost of multiplying by subfield
elements in tower fields.

\paragraph{Future Work:}
Our implementation is written in C and 
%no doubt still has room for improvement.  
may be improved with assembly for better register allocations.
Further, AVX-512 instructions, featuring 512-bit SIMD instructions,
will be more widely available soon.  We probably
cannot speed up the additive FFT multiplier to a factor of two, but
surely AVX-512 can be expected to offer a substantial advance.

%Can additive FFT works better on very large field ?

%Although F\"{u}"rer's algorithm

%%% Local Variables:
%%% mode: latex
%%% eval: (TeX-PDF-mode 1)
%%% eval: (TeX-source-correlate-mode 1)
%%% TeX-master: "bitpolymul.tex"
%%% End:

%\begin{acknowledgements}
%If you'd like to thank anyone, place your comments here
%and remove the percent signs.
%\end{acknowledgements}

% BibTeX users please use one of

%\bibliographystyle{alpha}
%\bibliographystyle{spbasic}      % basic style, author-year citations
%\bibliographystyle{spmpsci}      % mathematics and physical sciences
%\bibliography{../scientific,../superstition}

\newcommand{\etalchar}[1]{$^{#1}$}
\providecommand{\skiptext}[1]{} \hyphenation{ASIA-CRYPT}

\appendix

\section{Claims about $s_i(x)$ in Tower Fields and Proofs}
\label{sec:prooft}

%\subsection{(A)}

\textbf{Proposition~\ref{lemma:subPolyEval}.}
%
%\begin{myclaim}
%\label{lemma:subPolyEval2}
$s_k( v_k ) :=\prod_{b\in V_k}(v_k-b) = 1$.\smallskip

%\end{myclaim}
%{\bf Proof.} From BY:
We first discuss the special case that $k = 2^l$ is power of two and
there is a $\GF{q}$ for $q=k$.
By Galois's theory, we have $s_k(x) = \prod_{b\in \GF{q}} (x-b) = x^q + x$.
% We then evaluate $s_k(x)$ with elements in extended field.
\begin{proposition}
  If $q $ is a power of two, and choose any $a\in\GF{q}$ such that
  $\GF{q^2} := \GF{q}[x_l]/(x_l^2+x_l+a)$ is a valid field extension, then
  %in the polynomial ring $\GF{q^2}[x]$,
  \[  \prod_{b\in \GF{q}} (x_l-b) = x_l^q + x_l = 1.\]
\end{proposition}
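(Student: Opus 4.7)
The plan is to prove the identity in two stages, corresponding to the two equalities in the claim.

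First, I would establish $\prod_{b\in\GF{q}}(X-b) = X^q + X$ as polynomials in a single variable $X$. Since $\GF{q}$ consists of exactly the $q$ roots of $X^q - X$ (by Galois theory, these are the fixed points of the Frobenius $y\mapsto y^q$), and both sides are monic of degree $q$, they agree; in characteristic two, $-X = X$, so $X^q - X = X^q + X$. Substituting $X = x_l$ then gives the first equality.

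Next, for the less trivial equality $x_l^q + x_l = 1$, I would use the Galois theory of the quadratic extension $\GF{q^2}/\GF{q}$. Because $x_l^2 + x_l + a$ is irreducible over $\GF{q}$, the extension is Galois of degree two with non-trivial automorphism given by the Frobenius $\sigma : y \mapsto y^q$. Thus $\sigma(x_l) = x_l^q$ is the other root of the minimal polynomial $X^2 + X + a$. By Vieta's formulas applied to this polynomial, the sum of its two roots equals $1$ (in characteristic two), so the second root is $x_l + 1$. Hence $x_l^q = x_l + 1$ and therefore $x_l^q + x_l = 1$, as required.

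The only step that needs any real care is checking that $\sigma(x_l) = x_l + 1$ rather than $\sigma(x_l) = x_l$. The latter would force $x_l \in \GF{q}$, contradicting irreducibility of $X^2 + X + a$ over $\GF{q}$; since the Galois group has order two and acts transitively on the two roots of the minimal polynomial, $\sigma(x_l)$ must be the unique other root, namely $x_l + 1$. After this observation, the rest of the proof is immediate.
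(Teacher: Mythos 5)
Your proof is correct, but it follows a genuinely different route from the paper's. The paper computes $x_l^q + x_l$ directly via the telescoping identity $x_l^q+x_l=(x_l^2+x_l)+(x_l^2+x_l)^2+\cdots+(x_l^2+x_l)^{q/2}$, substitutes $x_l^2+x_l=a$ to identify the result with the absolute trace $\mathrm{Tr}_{\GF{q}/\GF{2}}(a)\in\{0,1\}$, and then rules out the value $0$ because $x_l\in\GF{q^2}\setminus\GF{q}$ while every root of $x^q=x$ already lies in $\GF{q}$. You instead note that $x_l^q=\sigma(x_l)$ for the nontrivial automorphism $\sigma$ of $\GF{q^2}/\GF{q}$, hence is a root of $X^2+X+a$; since the two roots sum to $1$ (Vieta, characteristic two) and $\sigma(x_l)=x_l$ would force $x_l\in\GF{q}$ against irreducibility, you conclude $x_l^q=x_l+1$. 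Both arguments dispose of the degenerate case by the same observation ($x_l\notin\GF{q}$), but your mechanism is Galois conjugation plus Vieta rather than an explicit computation: it is shorter and computation-free. What the paper's computation buys in exchange is the identification of $x_l^q+x_l$ with $\mathrm{Tr}_{\GF{q}/\GF{2}}(a)$, which makes visible the standard fact that irreducibility of $X^2+X+a$ forces $\mathrm{Tr}(a)=1$, a useful side remark when checking that the constants chosen in the tower construction give valid field extensions.
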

%\paragraph{Proof.}
\begin{proof}
\begin{eqnarray*}
   x_l^q + x_l &=& \left(x_l^2+x_l\right) + \left(x_l^2+x_l\right)^2 + \left(x_l^2+x_l\right)^4 + \cdots + \left(x_l^2+x_l\right)^{q/2} \\
  &=& a + a^2 + a^4 + \cdots + a^{q/2} \\
  &=& \text{Trace of } a \text{ (in \GF{q} over \GF{2})} \in \{0,1\}. 
\end{eqnarray*}
But zero here would be contradictory because  all $q$ solutions of
$x^q=x$ are already in \GF{q}, but $x_l$ is in $\GF{q^2} \setminus \GF{q}$,
hence we must have $x_l^q+x_l=1$. %\hfill $\Box$
\end{proof}

% Now let us define field extensions $x_i$ such that
% $\GF{4}=\GF{2}[x_1]/\left(x_1^2 + x_1 +1\right)$ and then recursively
% $ \GF{2^{2^{i+1}}} = \GF{2^{2^i}}[x_{2^i}]/\left(x_{2^i}^2 + x_{2^i} + x_1\cdots
%   x_{2^{i-1}} \right) .
% $
% Further define $v_i = \overline{2^i} $ and
% $V_k = \mathrm{span}(v_0,v_1,\ldots ,v_{k-1})$.
%% %$e_i = \prod_{i \% 2^j = 1} x_{j}$
%% where \% is the remainder operation, and hence the product goes over
%% all indices $j$ such that the $j$-th bit in the binary expansion of
%% $i$ taken from the right is $1$, then let
%% $V_k = \mathrm{span}(e_0,e_1,\ldots.e_{k-1})$.
%\begin{proposition}
% \[ s_k(v_k) = \prod_{b\in V_k}(v_k-b) = 1. \]
%\end{proposition}
%\paragraph{Proof.}
%{\bf Proof of Prop.~\ref{lemma:subPolyEval}.}
\begin{proof}[Proof of Prop.~\ref{lemma:subPolyEval}]
If $q = 2^j$ is a power of two, then $V_q = \GF{2^q}$ and the result
holds according to the Proposition above.  So we assume that the
proposition holds for all $k<h$ and $h = q + \ell < 2q$, where
$q=2^j$, and note that for $a \in V_q = \GF{2^q}$, we have
\(\left(v_q a\right)^q + v_q a = (v_q +1)a + v_q a = a\), and
\begin{align*}
\prod_{b\in V_h}(v_h-b) &=\prod_{c\in \mathrm{span}(v_q,\ldots v_{h-1})} \left( \prod_{c'\in V_q} (v_h + c + c')\right) & \mbox{ ( divide $V_h$ ) } \\
&= \prod_{c''\in V_{h-q}} \left( \prod_{c'\in V_q}\left( v_q (v_{h-q} + c'') + c'\right)\right)  & \mbox{(replace $c$ by $v_q c''$)} \\
&= \prod_{c''\in V_{h-q}}  \left[ \left( v_q ( v_{h-q} + c'') \right)^q + v_q ( v_{h-q} + c'' )\right] & \mbox{(Galois's theory)} \\
 &=  \prod_{c''\in V_{h-q}} \left(v_{h-q} + c''\right) = 1. & \mbox{(by induction)}
\end{align*}
\end{proof}

%
%\subsection{(B)}
%

\noindent
{\bf Proposition~\ref{lemma:SPderivative}.}
%\begin{proposition}
%\label{lemma:SPderivative2}
For any $v_i = \overline{2^i} $ w.r.t. tower representation,
$v_i^2 + v_i \in V_i$. \smallskip\\
%\end{proposition}
%\textbf{Proof.} %of Prop.~\ref{lemma:SPderivative}.}

In the proof of Prop.~\ref{lemma:SPderivative},
we need the following consequences from definition~\ref{def:v:basis} and \ref{def:novelpoly}.
\begin{corollary}
  If $q = 2^{k} > i$, then $x_{k+1} v_i = v_q v_i = v_{2^k + i}$.
\end{corollary}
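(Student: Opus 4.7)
The statement is essentially a combinatorial identity about the indexing scheme in Definition~\ref{def:v:basis}, with no field arithmetic involved beyond commutativity of multiplication. My plan is to prove it in two short steps: (i) identify $x_{k+1}$ as $v_{2^k}$ directly from the definition, and (ii) show $v_{2^k} v_i = v_{2^k+i}$ whenever $i<2^k$, by observing that the binary representations of $i$ and $2^k$ occupy disjoint bit positions so their $v$-products simply concatenate.

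For step (i), I would note that $2^k = \sum_{j=0}^{k} b_j 2^j$ with $b_k=1$ and all other $b_j=0$. Plugging this into Definition~\ref{def:v:basis} gives $v_{2^k} = \prod_{j=0}^{k} x_{j+1}^{b_j} = x_{k+1}$, which yields the first equality $x_{k+1} = v_q$.

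For step (ii), the hypothesis $q = 2^k > i$ means $i$ can be written uniquely as $i = \sum_{j=0}^{k-1} b_j 2^j$ with all bit positions strictly below $k$. Hence the binary expansion of $2^k + i$ has the single extra bit at position $k$ set and all lower bits unchanged:
\[
2^k + i \;=\; 2^k + \sum_{j=0}^{k-1} b_j 2^j.
\]
Applying Definition~\ref{def:v:basis} to each side,
\[
v_{2^k+i} \;=\; x_{k+1}\cdot \prod_{j=0}^{k-1} x_{j+1}^{b_j} \;=\; x_{k+1}\cdot v_i \;=\; v_q\, v_i,
\]
which is the second equality and completes the proof.

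\textbf{Potential obstacle.} There really is no substantive obstacle here; the only care needed is the off-by-one indexing (the bit at position $j$ corresponds to the variable $x_{j+1}$), and the explicit invocation of $i<2^k$ to guarantee that the binary expansions of $i$ and $2^k$ do not collide, which would otherwise introduce an $x_{k+1}^2$ factor and invalidate the clean product structure.
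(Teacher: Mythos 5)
Your proof is correct and matches the paper's (implicit) reasoning: the paper states this corollary as an immediate consequence of Definition~\ref{def:v:basis}, and your two steps — reading off $v_{2^k}=x_{k+1}$ and using the disjointness of the binary expansions of $2^k$ and $i<2^k$ — are exactly the argument being left unstated.
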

\begin{corollary}
\label{cons:aboutSpace}
  If $q = 2^{k} > i$ and $v\in V_i$,  then $x_{k+1} v = v_q v\in V_{2^k + i}$.
\end{corollary}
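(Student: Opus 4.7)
The plan is to reduce the statement to the previous corollary by linearity. Since $V_i = \mathrm{span}(v_0, v_1, \ldots, v_{i-1})$, any $v \in V_i$ can be written as $v = \sum_{j=0}^{i-1} c_j v_j$ with $c_j \in \GF{2}$. The first identity $x_{k+1} v = v_q v$ is immediate from Definition~\ref{def:v:basis}: writing $q = 2^k$ in binary has only the $k$-th bit set, so $v_q = v_{2^k} = x_{k+1}$.

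For the membership claim, I would apply $x_{k+1}$ term by term:
\[
x_{k+1} v \;=\; \sum_{j=0}^{i-1} c_j \, x_{k+1} v_j \;=\; \sum_{j=0}^{i-1} c_j \, v_{2^k + j},
\]
where the second equality uses the preceding corollary ($x_{k+1} v_j = v_{2^k+j}$), applicable because $q = 2^k > i > j$ for every index $j$ in the sum.

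Finally, since $j < i$ implies $2^k + j < 2^k + i$, each basis vector $v_{2^k+j}$ appearing in the sum lies in $V_{2^k+i} = \mathrm{span}(v_0,\ldots,v_{2^k+i-1})$. Thus the $\GF{2}$-linear combination $x_{k+1} v$ also lies in $V_{2^k+i}$, completing the argument. There is no real obstacle here — the proof is essentially just bilinearity of multiplication combined with the index arithmetic guaranteed by $q > i$; the only thing to be careful about is verifying that the hypothesis $q = 2^k > i$ is exactly what is needed so that each $v_{2^k+j}$ remains within the subspace $V_{2^k+i}$ and that the previous corollary applies to every summand.
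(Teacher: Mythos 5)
Your argument is correct: the identity $v_q = x_{k+1}$ is immediate from Definition~\ref{def:v:basis}, and the membership follows by $\GF{2}$-linearity from the preceding corollary applied to each basis vector $v_j$ with $j < i < 2^k$. The paper states this corollary without proof as a direct consequence of the definitions, and your linearity argument is exactly the intended justification.
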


\begin{proof}[Proof of Prop.~\ref{lemma:SPderivative}]
If $i = 2^k$, then $v_i = x_{k+1}$, and
$v_i^2 + v_i = x_1 x_2 \cdots x_k = v_{2^k-1}$ by the defintion of the
$x_i$ so the claim holds.  Using mathematical induction, we let
$2^k > i = 2^{k-1} + j$.  Thus $v_i = x_k v_j$, and
$v_i^2 + v_i = (x_k^2 + x_k) v_j^2 + x_k (v_j^2 + v_j)$.  The first
term is the product of two terms in $\GF{2^{2^{k-1}}}$ and therefore
is itself in $\GF{2^{2^{k-1}}}= V_{2^{k-1}} \subseteq V_i$.  The
second is the product of $v_{2^{k-1}}$ and an element of $V_j$ (by the
induction hypothesis), which is by Corollary~\ref{cons:aboutSpace} also in
$V_{2^{k-1}+j} = V_i$.
\end{proof}

\section{Basis Conversion: An Example}
\label{appendix:basis-conversion}

\begin{figure}[htbp]
\centering
    \includegraphics[height=6cm]{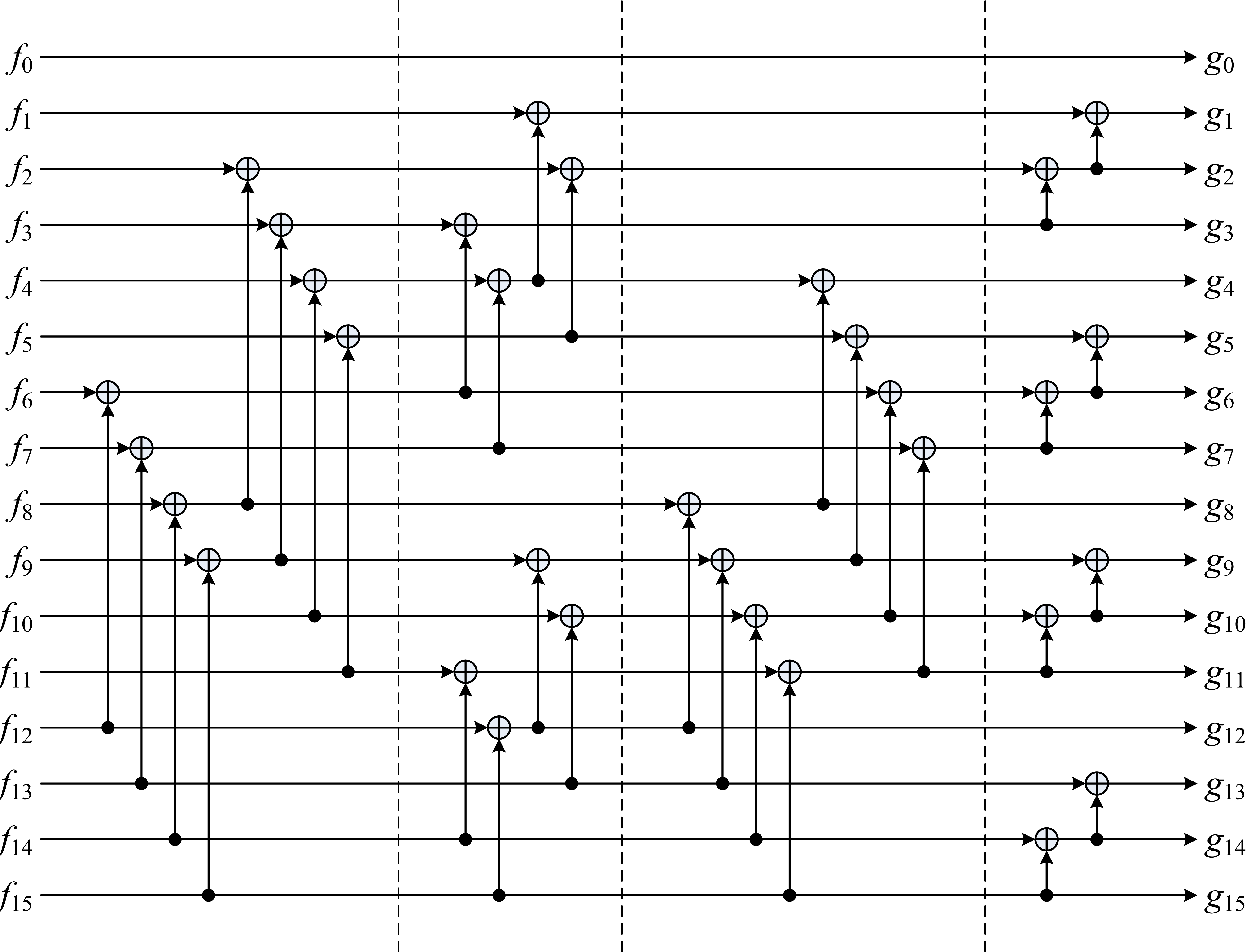}
\caption{From $f(x)=f_0 + \cdots + f_{15}x^{15} $ to $g(X) = g_0 + \cdots + g_{15} X_{15}$ by  Algorithm~\ref{alg:changeBasis2}.}
\label{fig:example:optbscvt}
\end{figure}

Figure~\ref{fig:example:optbscvt} shows an example for converting a
degree-15 polynomial to \novelpoly basis.
We have to divide by 3 different $s_i(x)$'s, namely $s_1(x),s_2(x)$,
and $s_3(x)$, which has 4 terms.
One can see there are actually 4 layers of division in
Figure~\ref{fig:example:optbscvt} and the number of \xor's are the
same in all layers.

What the first 2 layers perform is variable substitution into terms of
$ y = s_2(x) = x^4+x$ (See algorithm~\ref{alg:varSubs}).
The first layer is a division by $s_2(x)^2 = x^8 + x^2$ and second
layer is divisions by $s_2(x) = x^4 + x $ on the high
degree and low degree polynomials from the first layer.
The third layer is a division by $s_3(x)= y^2 + y$ \emph{by adding
  between coefficients of terms differing by a factor of $s_2(x)$, and
  4 positions apart} (see 3rd column of Table~\ref{tab:si}).
The last layer divides by $s_1(x) = x^2 + x$ for 4 short polynomials
(Last loop in Algo.~\ref{alg:changeBasis2}).
Note that we only do division by two terms polynomials in the conversion.

\section{Frobenius bijection between $\GF{2}[x]_{<128\cdot n}$ and $\GF{2^{128}}^n$}
\label{sec:frobenius:bijection}
In this section, we perform an FFT directly on $\GF{2}[x]$ instead of combining many coefficients
to a larger finite field, i.e., abandoning the Kronecker segmentation.
We refer the reader to \cite{vanderhoeven:hal-01579863} for preliminaries and focus on 
performing the idea with additive-FFT in this section.

\subsection{The partition of evaluated points}
%how to apply frebenius operation to cantor basis 

Given $\phi_2$ is the Frebenius operation in $\GF{2^{128}}$, which is exactly to square 
an element in an extension field of characteristic 2.
Thus, we have $\phi_2(\beta_0) = \beta_0$, and $\phi_2(\beta_i) = \beta_i^2 = 
\beta_i + \beta_{i-1}$ for $i > 0$ in Cantor basis.
Moreover, the order of $\phi_2$ for $\beta_{i>0}$ is $ 2 \cdot 2^{ \lfloor \log_2 i \rfloor} $.

\begin{definition}\label{def:eval:in:gf2}
Given $a(x) \in \GF{2}[x]_{< 128 \cdot n}$ and $n = 2^m$ with $m <63$ , 
define a linear map $ E_{\beta_{m+64} + V_m} : \GF{2}[x]_{< 128 \cdot n} \rightarrow \GF{2^{128}}^n$ 
is the evaluations of $a(x)$ at points $\beta_{m+64} + V_m$, i.e.,
$ E_{\beta_{m+64} + V_m}: a(x) \mapsto \{ a( \beta_{m+64} + i) : i \in V_m \} $ .
\end{definition}

We can see the evaluation of $a(x)$ at $n = 2^m$ points which is less than the number of coefficients.
However, the evaluation points are in $\GF{2^{128}}$.
It can be checked the order of $\phi_2(\beta_{m+64})$ is $128$.

\begin{proposition}
\label{lemma:the:partition}
$E_{\beta_{m+64} + V_m}$ is a bijection between $\GF{2}[x]_{<128\cdot n}$ and $\GF{2^{128}}^n$.
\end{proposition}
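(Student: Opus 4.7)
My approach is to use a dimension count plus injectivity via Frobenius orbits. Both $\GF{2}[x]_{<128n}$ and $\GF{2^{128}}^n$ have $\GF{2}$-dimension $128n$, and the map $E_{\beta_{m+64}+V_m}$ is $\GF{2}$-linear, so it suffices to prove injectivity. Suppose $a(x)\in\GF{2}[x]$ of degree $<128n$ vanishes on all of $\beta_{m+64}+V_m$. Since the coefficients of $a$ lie in $\GF{2}$, Frobenius commutes with $a$: $a(\gamma^{2^k})=a(\gamma)^{2^k}$. Hence $a$ also vanishes on the enlarged set
\[
S \;:=\; \{\,\phi_2^k(\beta_{m+64}+v) \;:\; 0 \leq k < 128,\; v \in V_m\,\}.
\]
The crux is to show $|S|=128n$; then $a$ has more roots than its degree and must be zero.

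I would prove that $\Psi:\{0,\ldots,127\}\times V_m \to \GF{2^{128}}$, $\Psi(k,v):=\phi_2^k(\beta_{m+64}+v)$, is injective. First, using the Cantor recursion $\phi_2(\beta_i)=\beta_i+\beta_{i-1}$, I would observe that $\phi_2(V_m)\subseteq V_m$, so $V_m$ is $\phi_2$-invariant. Now if $\Psi(k,v)=\Psi(k',v')$ with $k\leq k'$, applying $\phi_2^{128-k}$ and setting $j:=k'-k\in\{0,\ldots,127\}$ yields
\[
\delta_j \;:=\; \phi_2^j(\beta_{m+64}) + \beta_{m+64} \;=\; v + \phi_2^j(v'),
\]
whose right side lies in $V_m$. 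So it suffices to show $\delta_j\notin V_m$ whenever $1\leq j\leq 127$, which will force $j=0$, hence $k=k'$ and then $v=v'$.

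For the final non-containment, write $\phi_2=1+D$ where $D(\beta_i)=\beta_{i-1}$ for $i\geq 1$ and $D(\beta_0)=0$. In characteristic two the binomial expansion gives
\[
\delta_j \;=\; \bigl((1+D)^j-1\bigr)(\beta_{m+64}) \;=\; \sum_{k=1}^{j}\binom{j}{k}\,\beta_{m+64-k}.
\]
By Lucas's theorem the smallest $k\geq 1$ with $\binom{j}{k}$ odd is $k=2^{\nu_2(j)}$, the lowest set bit of $j$. For $1\leq j\leq 127$ we have $2^{\nu_2(j)}\leq 64$, so $\delta_j$ has a nonzero $\beta_{m+64-2^{\nu_2(j)}}$ term whose index is at least $m$, which rules out membership in $V_m$. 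I expect the main obstacle to be exactly this last step: packaging the Cantor-basis action of Frobenius via $D$ and invoking Lucas cleanly to pin down the highest-index term of $\delta_j$; once that is in place, everything else is dimension counting and routine Frobenius manipulation.
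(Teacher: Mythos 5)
Your proof is correct, and it is actually more complete than what the paper offers: the paper states Proposition~\ref{lemma:the:partition} without proof, giving only the remark that the Frobenius order of $\beta_{m+64}$ is $128$ and deferring background to the Harvey--van der Hoeven--Lecerf/van der Hoeven--Larrieu--Lecerf line of work. Your route — equal $\GF{2}$-dimensions plus injectivity, where injectivity follows because a polynomial with $\GF{2}$ coefficients vanishing on $\beta_{m+64}+V_m$ also vanishes on all $128n$ Frobenius conjugates — is the same underlying principle the paper gestures at, but you supply the step the paper's remark does not actually cover: it is not enough that $\phi_2^j(\beta_{m+64})\neq\beta_{m+64}$ for $1\le j\le 127$; one needs $\phi_2^j(\beta_{m+64})+\beta_{m+64}\notin V_m$ so that distinct Frobenius powers of the whole coset stay disjoint. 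Your treatment of this via $\phi_2=1+D$ with $D$ the index-shift nilpotent on the Cantor basis, expanding $(1+D)^j-1$ and using Lucas to identify the highest-index surviving term $\beta_{m+64-2^{\nu_2(j)}}$ with index at least $m$, is clean and correct: distinct $k$ in the expansion hit distinct basis vectors, so no cancellation can remove that term (and the implicit convention $D^k(\beta_{m+64})=0$ for $k>m+64$ is harmless since the decisive $k=2^{\nu_2(j)}\le 64$). The remaining pieces — $\phi_2$-invariance of $V_m$ from $\phi_2(\beta_i)=\beta_i+\beta_{i-1}$, the root-count bound, and bijectivity from injectivity by dimension count — are all in order, so your argument stands as a valid self-contained proof of the proposition.
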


\subsection{Truncated Additive FFT}

To evaluate a polynomial $a(x) \in \GF{2}[x]_{<128\cdot n}$ at $2^{m}$ points in $\GF{2^{128}}$,
we treat each coefficient of $a(x)$ is an element in $\GF{2^{128}}$.
Since there are $128 \cdot n = 128 \cdot 2^m$ coefficients and only $2^m$ evaluation points,
the $E_{\beta_{m+64} + V_m}$ is exactly a truncated additive FFT to $1/128$ points of
an FFT at points $\beta_{m+64} + V_{m+7}$.
The basis conversion can be proceeded straightforwardly with alg.~\ref{alg:changeBasis2}. 
With the notation of alg.~\ref{alg:fft},
the evaluation of $a(x)$ at $\beta_{m+64} + V_m$ can be accomplished
with the scalar displacement $\alpha = \beta_{m+64}$ and reserve only
first $1/128$ results of the evaluation points $V_{m+7}$.

\subsection{Encoding: the First Seven Layers of Truncated FFT}

We can also translate the truncated FFT to an ``encoding'', corresponding to 
the starting 7 layers of butterflies, of input coefficients and followed
by a normal \lchbtfy over $\GF{2^{128}}[x]_{<n}$ at points $V_{m}$ with $\alpha = \beta_{m+64}$.
Since there is only one bit for each element in the first layer of the truncated FFT,
each element of 8th layer is effected by $2^7$ bits in the first layers 
The multipliers from the first to 7th layers are $( s_{m+i}(\beta_{m+64}) )_{i=7}^1$.
Hence, we have the ``encoding'' for the $i$-th element is 
$\sum_{j=0}^{127} a_{j\cdot n + i} \cdot ( \prod_{\text{$k$-th bit of $j$ is $1$}} s_{m+7-k}(\beta_{m+64}) )$.
Every element in 8th layer of \lchbtfy is a result of 
linear map with input $( a_{j\cdot n + i })_{j=0}^{127}$ which is a $128 
\times n$ transpose of input data.

\subsection{Results}

Table~\ref{tab:frobenius:bitpolymul} shows the results of our implementation
on multiplying polynomials with additive FFT and the comparison with implementation
from \cite{vanderhoeven:hal-01579863}.

\begin{table}[h] %\renewcommand{\arraystretch}{0.7}
\begin{center}
%\small
\begin{threeparttable}
\caption{Multiplicaitons with Frobenius partitions.
Products in degree $< d$ in $\GF{2}[x]$ on Intel Xeon E3-1245 v3 @ 3.40GHz ($10^{-3}$ sec.)
%with AVX2 instruction set, in milliseconds.
}
\label{tab:frobenius:bitpolymul}
\begin{tabular}{|l | r | r | r| r | r | r | r | r | r | }
\hline
   $\log_2 d/64$ & 15 & 16  & 17 &  18 & 19 & 20 & 21 & 22 & 23   \\  \hline\hline
add FFT, $\GF{2^{128}}$ \tnote{a} & 7 & 15 & 32 & 67 & 144 & 319 & 706 & 1495 & 3139 \\  \hline
DFT, $\GF{2^{60}}$ \cite{vanderhoeven:hal-01579863} \tnote{c} & 6 &  15 &  32  &  72   & 165   & 311   &  724  & 1280 & 3397 \\  \hline
\hline
\end{tabular}
\begin{tablenotes}
  \small
  \item [c] Version 2ae2461. \url{https://github.com/fast-crypto-lab/bitpolymul}
  \item [c] SVN r10681. Available from \url{svn://scm.gforge.inria.fr/svn/mmx}
\end{tablenotes}
\end{threeparttable}
\end{center}
\end{table}

\section{Profiles: Raw Data}
\label{sec:profile:rawdata}

Table~\ref{tab:prof:fft} shows the raw data of various componets of our implementations.
Table~\ref{tab:bitpolymul2-skylake} shows the performance of our implemetations in 
Intel Skylake architecture.

\begin{table}[h]
\centering
\scriptsize
\caption{Running time of various Components, in $10^{-6}$ sec.}
\label{tab:prof:fft}
\begin{tabular}{| ll| rrrrrrr |}
\hline 
\multicolumn{2}{|c|}{$\log_2 d/64$}  & 15   & 16  & 17   & 18      & 19    & 20    & 21  \\ \hline
\hline
\multirow{4}{3em}{$\TGF{2^{256}}$}
 & \texttt{chRepr} $\times2$ & 971  & 1955 & 3482 & 7226  & 13991 & 27550  & 53982 \\ \cline{3-9}
 & \bscvt $\times2$          & 823  & 1846 & 4055 & 8398  & 17598 & 35611  & 80913 \\ \cline{3-9}
 & \lchbtfy $\times2$        & 2054 & 4358 & 9559 & 24004 & 70109 & 153674 & 328211 \\ \cline{3-9}
 & \pointmul                 & 1870 & 3732 & 7591 & 15686 & 33122 & 62471  & 125001 \\ \cline{3-9}
 & \ilchbtfy                 & 923  & 1960 & 4402 & 11782 & 34244 & 75693  & 161648 \\ \cline{3-9}
 & \ibscvt                   & 1081 & 2418 & 4958 & 11008 & 26643 & 60948  & 131862 \\ \cline{3-9} 
 & \texttt{ichRepr}          & 1500 & 2989 & 5969 & 11913 & 24046 & 47632  & 94949 \\ \hline

\multirow{4}{3em}{$\TGF{2^{128}}$}
 & \texttt{chRepr} $\times2$  & 924  & 1866 & 3890  & 7172  & 13237 & 25965  & 52275 \\ \cline{3-9}
 & \bscvt $\times2$           & 1340 & 2916 & 5969  & 12328 & 25415 & 51969  & 113587 \\ \cline{3-9}
 & \lchbtfy $\times2$         & 2613 & 6039 & 13096 & 31852 & 83316 & 187370 & 402506 \\ \cline{3-9}
 & \pointmul                  & 1079 & 2147 & 4305  & 9003  & 19549 & 35978  & 72710 \\ \cline{3-9}
 & \ilchbtfy                  & 1216 & 2778 & 6305  & 15370 & 40224 & 91015  & 195958 \\ \cline{3-9}
 & \ibscvt                    & 1668 & 3377 & 7075  & 15282 & 36113 & 78775  & 174826 \\ \cline{3-9} 
 & \texttt{ichRepr}           & 1417 & 2794 & 5587  & 11202 & 22472 & 44970  & 90738  \\ \hline

\multirow{3}{3em}{$\GF{2^{128}}$}  
 & \bscvt $\times2$    & 1337 & 2910  & 5970  & 12355 & 25118  & 52793  & 112762 \\ \cline{3-9}
 & \lchbtfy $\times2$  & 5856 & 12531 & 26327 & 57631 & 126342 & 272165 & 615555 \\ \cline{3-9}
 & \pointmul           &  317 & 635   & 1297  &  2863 & 7250   & 10278  & 21188 \\ \cline{3-9}
 & \ilchbtfy           & 2766 & 5895  & 12600 & 27338 & 60891  & 131325 & 277294 \\ \cline{3-9}
 & \ibscvt             & 1656 & 3371  & 7044  & 15273 & 36189  & 79076  & 172195 \\ \hline
\end{tabular}
\end{table}

\begin{table}[h] %\renewcommand{\arraystretch}{0.7}
\begin{center}
%\small
\begin{threeparttable}
\caption{Products in degree $< d$ in $\GF{2}[x]$ on Intel Xeon E3-1275 v5 @ 3.60GHz ($10^{-3}$ sec.)
%with AVX2 instruction set, in milliseconds.
}
\label{tab:bitpolymul2-skylake}
\begin{tabular}{|l | r | r | r| r | r | r | r | r | r | }
\hline
   $\log_2 d/64$ & 15 & 16  & 17 &  18 & 19 & 20 & 21 & 22 & 23   \\  \hline\hline
This work, $\TGF{2^{256}}$          &  8 & 16 & 34 &  74 & 175 & 382 & 817 & 1734 & 3666  \\  \hline 
This work, $\GF{2^{128}}$  &  9 & 18 & 38 &  84 & 187 & 408 & 889 & 1907 & 4075 \\  \hline \hline
\texttt{gf2x} \tnote{a} \cite{gf2x} & 11 & 23 & 51 & 111 & 250 & 507 & 1182 & 2614 & 6195  \\ \hline 
$\GF{2^{60}}$\cite{DBLP:conf/issac/HarveyHL16}\tnote{b}  & 10 &  22 &  51  &  116   & 217   & 533   &  885  & 2286 & 5301 \\  \hline
\hline
\end{tabular}
\begin{tablenotes}
  \small
  \item [a] Version 1.2. Available from \url{http://gf2x.gforge.inria.fr/}.
  \item [b] SVN r10663. Available from \url{svn://scm.gforge.inria.fr/svn/mmx}
\end{tablenotes}
\end{threeparttable}
\end{center}
\end{table}

% \begin{table}[h] %\renewcommand{\arraystretch}{0.7}
% \begin{center}
% %\small
% \begin{threeparttable}
% \caption{Products in degree $< d$ in $\GF{2}[x]$ on AMD Ryzen 7 1800X($10^{-3}s$)
% %with AVX2 instruction set, in milliseconds.
% }
% \label{tab:bitpolymul2-ryzen}
% \begin{tabular}{|l | r | r | r| r | r | r | r | r | r | }
% \hline
%    $\log_2 d/64$ & 15 & 16  & 17 &  18 & 19 & 20 & 21 & 22 & 23   \\  \hline\hline
% This work, $\TGF{2^{256}}$          &9 & 18 & 39 & 85 & 220 & 455 & 950 & 1971 & 4073 \\ \hline
% This work, $\GF{2^{128}}$ (Cantor)  &8 & 18 & 37 & 81 & 191 & 400 & 851 & 1791 & 3721 \\ \hline \hline
% \texttt{gf2x-1.2} \tnote{a} \cite{gf2x} &10 & 22 & 56 & 109 & 262 & 546 & 1311 & 3195 & 5270\\ \hline 
% DFT in $\GF{2^{60}}$\cite{DBLP:conf/issac/HarveyHL16}\tnote{b}  & 11 & 25 & 58 & 133 & 242 & 652 & 995 & 2835 & 6315 \\  \hline
% \hline
% \end{tabular}
% \begin{tablenotes}
%   \small
%   \item [a] Latest version, available from \url{http://gf2x.gforge.inria.fr/}.
%   \item [b] Latest version (svn r10663). Available from \url{svn://scm.gforge.inria.fr/svn/mmx}
% \end{tablenotes}
% \end{threeparttable}
% \end{center}
% \end{table}

%%% Local Variables:
%%% mode: latex
%%% eval: (TeX-PDF-mode 1)
%%% eval: (TeX-source-correlate-mode 1)
%%% TeX-master: "bitpolymul.tex"
%%% End:

\end{document}

%%% Local Variables:
%%% mode: latex
%%% eval: (TeX-PDF-mode 1)
%%% eval: (TeX-source-correlate-mode 1)
%%% TeX-master: "bitpolymul.tex"
%%% End: